\newcommand{\Disp}[1]{{\displaystyle{#1}}}
\newcommand{\ot}{\otimes}
\newcommand{\bigot}{\bigotimes}
\newcommand{\bpmat}{\begin{pmatrix}}
\newcommand{\epmat}{\end{pmatrix}}
\newcommand{\init}{\mathrm{init}}
\newcommand{\ter}{\mathrm{ter}}
\newcommand{\rank}{\mathrm{rank}\,}
\newcommand{\cprod}{\,\Box\,}
\newcommand{\setmid}{\mathrel{}\middle|\mathrel{}}
\newtheorem{definition}{Definition.}
\newtheorem{theorem}{Theorem.}
\begin{document}

\title{
Equivalence of lattice operators and graph matrices
}

\author{Jun Yumoto}
\email{d8521007(at)s.akita-u.ac.jp}
\address{Department of Mathematical Science, Akita University, Akita 010-8502, Japan}

\author{Tatsuhiro Misumi}
\email{misumi(at)phys.kindai.ac.jp}
\address{Department of Physics, Kindai University, Osaka 577-8502, Japan}
\address{Research and Education Center for Natural Sciences, Keio University, Kanagawa 223-8521, Japan}

\begin{abstract}
We explore the relationship between lattice field theory and graph theory, placing special emphasis on the interplay between Dirac and scalar lattice operators and matrices within the realm of spectral graph theory. Beyond delving into fundamental concepts of spectral graph theory, such as adjacency and Laplacian matrices, we introduce a novel matrix named as "anti-symmetrized adjacency matrix", specifically tailored for cycle digraphs ($T^1$ lattice) and simple directed paths ($B^1$ lattice). 
The nontrivial relation between graph theory matrices and lattice operators shows that the graph Laplacian matrix mirrors the lattice scalar operator and the Wilson term in lattice fermions, while the anti-symmetrized adjacency matrix, along with its extensions to higher dimensions, are equivalent to naive lattice Dirac operators.
Building upon these connections, we provide rigorous proofs for two key assertions:
(i) The count of zero-modes in a free lattice scalar operator coincides with the zeroth Betti number of the underlying graph (lattice).
(ii) The maximum count of Dirac zero-modes in a free lattice fermion operator is equivalent to the cumulative sum of all Betti numbers when the $D$-dimensional graph results from a cartesian product of cycle digraphs ($T^1$ lattice) and simple directed paths ($B^1$ lattice).
\end{abstract}

\maketitle

\newpage

\tableofcontents

\newpage


\section{Introduction}
\label{sec:Intro}

Lattice field theory, a powerful framework for simulating the behavior of quantum field theories on a discrete lattice, has been an indispensable tool in this pursuit \cite{Wilson:1974sk, Creutz:1980zw}. Among the various challenges faced in lattice field theory, one of the most pervasive and profound is the "doubling problem" associated with lattice fermions \cite{Karsten:1980wd, Nielsen:1980rz, Nielsen:1981xu, Nielsen:1981hk}:
When one discretizes spacetime into a lattice, multiple unwanted fermionic degrees of freedom or "doublers" appear in the theory. These doublers do not correspond to the physical fermions we seek to describe and can lead to unphysical results if not properly addressed.

Over the years, physicists have developed various fermion formulations to tackle this challenge, aiming to recover the desired physics while suppressing the unphysical doublers. These fermion formulations including Wilson fermions \cite{Wilson:1975id}, Domain-wall or overlap fermions \cite{Kaplan:1992bt, Shamir:1993zy, Furman:1994ky, Neuberger:1998wv, Ginsparg:1981bj}, and staggered fermions \cite{Kogut:1974ag, Susskind:1976jm, Kawamoto:1981hw,Sharatchandra:1981si,Golterman:1984cy,Golterman:1985dz,Kilcup:1986dg} represent a crucial area of research in the field, offering insights into the fundamental nature of fermions and guiding our understanding of particle interactions.
The ``non-standard" approaches have been also proposed, including the generalized Wilson fermions \cite{Bietenholz:1999km,Creutz:2010bm,Durr:2010ch,Durr:2012dw, Misumi:2012eh,Cho:2013yha,Cho:2015ffa,Durr:2017wfi},
the staggered-Wilson fermions \cite{Golterman:1984cy, Adams:2009eb,Adams:2010gx,Hoelbling:2010jw, deForcrand:2011ak,Creutz:2011cd,Misumi:2011su,Follana:2011kh,deForcrand:2012bm,Misumi:2012sp,Misumi:2012eh,Durr:2013gp,Hoelbling:2016qfv,Zielinski:2017pko}, the minimally doubled fermion \cite{Karsten:1981gd,Wilczek:1987kw,Creutz:2007af,Borici:2007kz,Bedaque:2008xs,Bedaque:2008jm, Capitani:2009yn,Kimura:2009qe,Kimura:2009di,Creutz:2010cz,Capitani:2010nn,Tiburzi:2010bm,Kamata:2011jn,Misumi:2012uu,Misumi:2012ky,Capitani:2013zta,Capitani:2013iha,Misumi:2013maa,Weber:2013tfa,Weber:2017eds,Durr:2020yqa} 
and the central-branch Wilson fermion \cite{Kimura:2011ik,Creutz:2011cd,Misumi:2012eh,Chowdhury:2013ux}.

In the previous works of ours \cite{Yumoto:2021fkm,Yumoto:2023ums}, we figured out the non-trivial relation between spectral graph theory and lattice field theory, and investigated the number of zero-eigenvalues of lattice Dirac operators in terms of graph theory (See also the literature \cite{Ohta:2021xty,Matsuura2022,Matsuura_graph-zeta_2022,Matsuura2023}).
We proposed a conjecture \cite{Yumoto:2023ums} claiming, ``{\it under certain conditions, the maximal number of Dirac zero-modes of 
a free lattice fermion is equal to the sum of Betti numbers of the graph (lattice) on which the fermion is defined}".
This conjecture is consistent with the known fact on naive lattice fermions: the species of the four-dimensional naive fermion is sixteen, which is interpreted as the sum of the Betti number of four-dimensional torus ($T^4$).

In this study, we investigate operators in lattice field theory using spectral graph theory and present partial evidence supporting the conjecture regarding the interplay between Dirac zero-modes and the Betti numbers of the graph \cite{Yumoto:2023ums}.
Beyond fundamental concepts in graph theory, such as the adjacency matrix and Laplacian matrix, we introduce an "anti-symmetrized adjacency matrix" and explore its rank in relation to graph topology.
It is noteworthy that the graph Laplacian matrix corresponds to a lattice scalar operator and a Wilson term of lattice fermion, while the anti-symmetrized adjacency matrix, along with its higher-dimensional extensions, coincide with a naive Dirac operator for the free lattice theory. Leveraging these equivalences, we elucidate the counts of exact zero-modes for free scalar and free Dirac operators on the lattice, linking them to Betti numbers associated with graph topology:
(i) The number of zero-modes for a free lattice scalar operator is dictated by the zero-th Betti number of the graph (lattice).
(ii) The maximal number of Dirac zero-modes for a free fermion operator is equated to the sum of all Betti numbers of the graph (lattice). This holds true for $D$-dimensional graphs structured as cartesian products of cycle digraphs ($T^1$ lattice) and simple directed paths ($B^1$ lattice).
We also discuss our result indicating that the naive and massless fermion on a certain graph corresponding to a $D$-dimensional sphere has two exact Dirac zero-modes, which is consistent to the fact that the sum of Betti numbers of the $D$-dimensional sphere is two.

This paper is constructed as follows:
In Sec.~\ref{sec:GT} and Sec.~\ref{sec:TP} we review graph theory and matrices defined in the theory. We review the basic theorems and show a novel theorem regarding the anti-symmetrized adjacency matrix.
In Sec.~\ref{sec:LT} we study lattice scalar and Dirac fields in terms of graph theory, and show the theorems on the zero-modes of the scalar and Dirac operators.  
Sec.~\ref{sec:SD} is devoted to the summary and discussion.

\section{Graph Theory and Matrices}
\label{sec:GT}

\subsection{Graph}

We firstly introduce basic notions and definitions of matrices in graph theory.
The definition of a undirected and unweighted graph \cite{west2001introduction,bondy1976graph,mieghem_2010,Watts1998} is given as follows:
\begin{definition}[graph]
	A graph $G$ is a pair $G = (V,E)$, where $V$ is a set of vertices of the graph and $E$ is a set of edges of the graph.
\end{definition}
As examples, we exhibit two graphs in Fig.~\ref{graph:undirected} with $V = \left\{ 1,2,3,4 \right\}$ and $E =  \{\{ 1,2 \}, \{ 1,3 \}, \{ 1,4 \}, \{ 3,4 \} \}$. Here, $\{i,j\}$ stands for an edge from $i$ to $j$.
If every adjacent vertices can be joined by an edge, the graph is referred to as ``connected".
Each of connected pieces of a graph is referred to as a ``connected component".
The two graphs in Fig.~\ref{graph:undirected} are connected, where they have single connected components.
Directed graphs are defined as,
\begin{figure}[]
	\includegraphics[height=4cm]{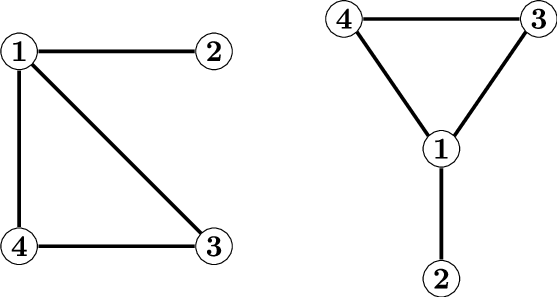}
\vspace{5pt}
\caption{These examples are graphs having a pair $G=(V,E)$ with $V = \left\{ 1,2,3,4 \right\}$ and $E = \left\{ \{ 1,2 \}, \{ 1,3 \}, \{ 1,4 \}, \{ 3,4 \} \right\}$.}
\label{graph:undirected}
\end{figure}

\begin{definition}[directed graph or digraph]
	A directed graph (or digraph) is a pair $(V,E)$ of sets of vertices and edges together with two maps $\init : E \to V$ and $\ter : E \to V$. The two maps are assigned to every edge $e_{ij}$ with an initial vertex $\init (e_{ij}) = v_{i} \in V$ and a terminal vertex $\ter(e_{ij}) = v_{j} \in V$. The edge $e_{ij}$ is said to be directed from $\init (e_{ij})$ to $\ter (e_{ij})$. If $\init(e_{ij}) = \ter(e_{ij})$, the edge $e_{ij}$ is called a loop.
\end{definition}
Two graphs in Fig.~\ref{graph:directed} are digraphs with $V = \left\{ 1,2,3,4 \right\}$ and $E = \left\{ \{ 1,2 \}, \{ 1,3 \}, \{ 1,4 \}, \{ 3,4 \} \right\}$. Initial vertex and terminal vertices are assigned as $\init(\{i,j\}) = i, \ter(\{i,j\}) = j$. Weighted graphs are defined as follows:
\begin{figure}[]
\centering
	\includegraphics[height=4cm]{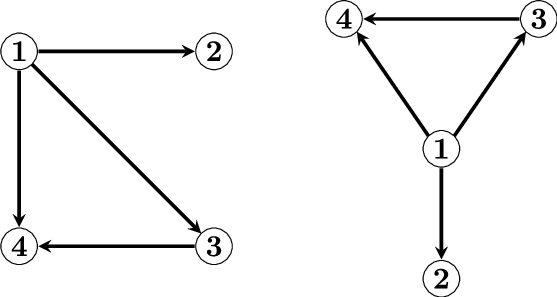}
	\vspace{5pt}
\caption{These examples are graphs having a pair $(V,E)$ with $V = \left\{ 1,2,3,4 \right\}$ and $E = \left\{ \{ 1,2 \}, \{ 1,3 \}, \{ 1,4 \}, \{ 3,4 \} \right\}$. The initial vertices of edges are $\init(\{1,2\}) = 1, \init(\{1,3\}) = 1, \init(\{1,4\}) = 1, \init(\{3,4\}) = 3$, while the terminal vertices are $\init(\{1,2\}) = 2, \init(\{1,3\}) = 3, \init(\{1,4\}) = 4, \init(\{3,4\}) = 4$.}
\label{graph:directed}
\end{figure}

\begin{definition}[weighted graph]
The weighted graph has a value (the weight) for each edge in a graph or a digraph.
\end{definition}
We depict an example of weighted graphs in Fig.~\ref{graph:weighted}. 
It is a weighted and directed graph, each of whose edge has a weight.
\begin{figure}[]
	\includegraphics[height=4cm]{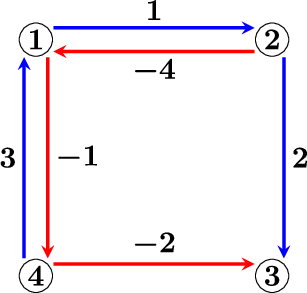}
	\vspace{5pt}
\caption{This digraph is weighted. Blue edges in the graph are those with positive weights, while red edges are those with negative weights.}
\label{graph:weighted}
\end{figure}

\subsection{Matrices}

We here give definitions of matrices associated with graphs. 
We first give a definition of a degree matrix.
\begin{definition}[Degree matrix]
A degree matrix $D$ of a graph is a $|V| \times |V|$ matrix defined as
	\begin{equation}
		D_{ij} = \begin{cases}
			{\rm deg}(v_{i}) 	& \mbox{$i=j$} \\
			0		& \mbox{otherwise}
		\end{cases}\,.
	\end{equation}
\end{definition}
The degree ${\rm deg}(v_{i})$ of a vertex $v_{i}$ counts the number of times an edge terminates at that vertex.
It is defined for both undirected and directed graphs.
As an example we exhibit an degree matrix $D$ of a undirected graph in Fig.~\ref{graph:undirected} as
\begin{equation}
	D = \begin{pmatrix}
		3 & 0 & 0 & 0 \\
		0 & 1 & 0 & 0 \\
		0 & 0& 2 & 0 \\
		0 & 0 & 0 & 2
	\end{pmatrix}\,.
\end{equation}
On the other hand, the degree matrix $D$ of a directed graph in Fig.~\ref{graph:directed} is
\begin{equation}
	D = \begin{pmatrix}
		0 & 0 & 0 & 0 \\
		0 & 1 & 0 & 0 \\
		0 & 0& 1 & 0 \\
		0 & 0 & 0 & 2
	\end{pmatrix}\,.
\end{equation}
\vspace{0.5cm}

We next give a definition of an incidence matrix for undirected matrices.
\begin{definition}[Incidence matrix (undirected)]
An incidence matrix $B$ of a undirected graph is a $|V| \times |E|$ matrix defined as
	\begin{equation}
		B_{ij} = \begin{cases}
			1 	& \mbox{a vertex $v_{i}$ is incident with edge $e_{j}$} \\
			0		& \mbox{otherwise}
		\end{cases}\,.
	\end{equation}
\end{definition}
As an example we exhibit an incidence matrix $B$ of a graph in Fig.~\ref{graph:undirected}
\begin{equation}
	B = \begin{pmatrix}
		1 & 1 & 1 & 0 \\
		1 & 0 & 0 & 0 \\
		0 & 1& 0 & 1 \\
		0 & 0 & 1 & 1
	\end{pmatrix}\,,
\end{equation}
where we define $e_{13} = e_{1}, e_{21} = e_{2}, e_{34} = e_{3}, e_{41} = e_{4}$.

\begin{definition}[Incidence matrix (directed)]
An incidence matrix $B$ of a directed graph is a $|V| \times |E|$ matrix defined as
	\begin{equation}
		B_{ij} = \begin{cases}
		         -1     &  \mbox{an edge $e_{j}$ leaves a vertex $v_{i}$} \\
			1 	& \mbox{an edge $e_{j}$ enters a vertex $v_{i}$} \\
			0		& \mbox{otherwise}
		\end{cases}\,.
	\end{equation}
\end{definition}
The incidence matrix $B$ of a graph in Fig.~\ref{graph:directed} is
\begin{equation}
	B = \begin{pmatrix}
		-1 & -1 & -1 & 0 \\
		1 & 0 & 0 & 0 \\
		0 & 1 & 0 & -1 \\
		0 & 0 & 1 & 1
	\end{pmatrix}\,.
\end{equation}
where we again define $e_{13} = e_{1}, e_{21} = e_{2}, e_{34} = e_{3}, e_{41} = e_{4}$.
\vspace{0.5cm}

We next give a definition of an adjacency matrix.
An adjacency matrix for unweighted graphs is defined as follows:
\begin{definition}[adjacency matrix (unweighted)]
\label{def:adj}
	The adjacency matrix $A$ of a graph is the $|V| \times |V|$ matrix given by
	\begin{equation}
		A_{ij} = \begin{cases}
			1	& \mbox{if there is an edge between $i$ and $j$} \\
			0		& \mbox{otherwise}
		\end{cases}\,.
	\end{equation}
\end{definition}
This matrix is symmetric by definition.
As an example, we exhibit an adjacency matrix $A$ of a graph in Fig.~\ref{graph:undirected} as
\begin{equation}
	A = \begin{pmatrix}
		0 & 1 & 1 & 1 \\
		1 & 0 & 0 & 0 \\
		1 & 0& 0 & 1 \\
		1 & 0 & 1 & 0
	\end{pmatrix}\,.
\end{equation}
Furthermore, an adjacency matrix for weighted graphs is defined as follows.
\begin{definition}[adjacency matrix (weighted)]
	The adjacency matrix $A$ of a graph is the $|V| \times |V|$ matrix given by
	\begin{equation}
		A_{ij} = \begin{cases}
			w_{ij}	& \mbox{if there is a edge from $i$ to $j$} \\
			0		& \mbox{otherwise}
		\end{cases}\,,
	\end{equation}
	where $w_{ij}$ is the weight of an edge from $i$ to $j$.
\end{definition}
As an example we exhibit an adjacency matrix $A$ of a graph in Fig.~\ref{graph:weighted}
\begin{equation}
	A = \begin{pmatrix}
		0 & 1 & 0 & -1 \\
		-4 & 0 & 2 & 0 \\
		0 & 0 & 0 & 0 \\
		3 & 0 & -2 & 0
	\end{pmatrix}\,.
\end{equation}
In general, the adjacency matrix of a directed graph is asymmetric since the existence of an edge from $i$ to $j$ does not necessarily imply that there is also an edge from $j$ to $i$.
\vspace{0.5cm}

The Laplacian matrix is defined by use of the degree, adjacency and incidence matrices as follows:
\begin{definition}[Laplacian matrix]
The Laplacian matrix $L$ of a graph is the $|V| \times |V|$ matrix given by
	\begin{equation}
		L = D - A = BB^{T} 
  \label{eq:Lap0}
	\end{equation}
$D,A$ are degree, adjacency matrices of a undirected and unweighted graph,
while $B$ is an incidence matrix of directed graph.
\end{definition}
For a graph in Fig.~\ref{graph:undirected},
the Laplacian matrix is
\begin{equation}
	L = \begin{pmatrix}
		3 & -1 & -1 & -1 \\
		-1 & 1 & 0 & 0 \\
		-1 & 0& 2 & -1 \\
		-1 & 0 & -1 & 2
	\end{pmatrix}\,.
\end{equation}
One can easily check out $L = D - A = BB^{T} $.
\vspace{0.5cm}

We extend the notion of graph Laplacian matrix to higher-dimensional graphs, where a face $F$ can be defined on an area surrounded by a loop (or a cycle) of edges.  
For such a generalized directed graph, the combinatorial Laplacian matrix is defined as follows:
(We note that the symbols $l$ or $k$ stand for edges. )
\begin{definition}[combinatorial Laplacian matrix]
The combinatorial Laplacian matrix $L_{1}$ of a graph is the $|E| \times |E|$ matrix given by
	\begin{equation}
		(L_{1})_{lk} = \begin{cases}
			2+ {\rm deg}(F_l) 	& \mbox{l = k} \\
			1		&  \mbox{if both l and k go in or out of the same vertex  } \\
                -1    &  \mbox{if one goes in and the other goes out of the same vertex  } \\
                 0   & \mbox{otherwise} 
		\end{cases}\,,
   \label{eq:Lap1}
	\end{equation}
where ${\rm deg}(F_{l})$ is the number of faces touching the edge $l$.
\end{definition}
As an example, we take a graph in Fig.~\ref{graph:directed}.
There are two choices: the area enclosed by the loop of edges can be regarded as a face or a hole.
We here name four edges as $12,13,34,14$.
The combinatorial Laplacian matrix for the graph with a face is
\begin{equation}
	L_1 = \begin{pmatrix}
		2 & 1 & 0 & 1 \\
		1 & 3 & -1 & 1 \\
		0 & -1& 3 & 1 \\
		1 & 1 & 1 & 3
	\end{pmatrix}\,.
 \label{eq:nlcm}
\end{equation}
The combinatorial Laplacian matrix for the graph without a face (with a hole) is
\begin{equation}
	L_1 = \begin{pmatrix}
		2 & 1 & 0 & 1 \\
		1 & 2 & -1 & 1 \\
		0 & -1& 2 & 1 \\
		1 & 1 & 1 & 2
	\end{pmatrix}\,.
  \label{eq:lcm}
\end{equation}
\vspace{0.5cm}

In addition to the standard definitions of graph matrices, we introduce a specific adjacency matrix of directed graph as an ``anti-symmetrized adjacency matrix" as follows.
\begin{definition}[Anti-symmetrized adjacency matrix]
\label{def:ASAM}
	The anti-symmetrized adjacency matrix $A_{\rm as}$ of a directed graph having no multiple edges is the $|V| \times |V|$ matrix given by
	\begin{equation}
		(A_{\rm as})_{ij} = \begin{cases}
			1	& \mbox{an edge leaves $i$ and enters $j$} \\
			-1	& \mbox{an edge leaves $j$ and enters $i$} \\
			0		& \mbox{otherwise}
		\end{cases}\,,
	\end{equation}
\end{definition}
The anti-symmetrized adjacency matrix of a directed graph is anti-symmetric since it is expressed as $A_{\rm as} = A - A^{T}$, where $A$ is the adjacency matrix defined in Def.~\ref{def:adj}.
The anti-symmetrized adjacency matrix $A_{\rm as}$ of a graph in Fig.~\ref{graph:directed} is
\begin{equation}
	A_{\rm as} = \begin{pmatrix}
		0 & 1 & 1 & 1 \\
		-1 & 0 & 0 & 0 \\
		-1 & 0 & 0 & 1 \\
		-1 & 0 & -1 & 0
	\end{pmatrix}\,.
\end{equation}

\section{Topology of graph}
\label{sec:TP}

\subsection{Betti numbers and Laplacian}

The topology of a graph is known to be detected by matrices associated with the graph.
\begin{theorem}[Betti numbers and Laplacian]
The zeroth and first Betti numbers $\beta_{0},\beta_{1}$ of the graphs are related to the rank of graph Laplacian.
	\begin{align}
    \label{eq:lap_betti0}
          |V| - {\rm rank}(L) &= \beta_{0}\,,
          \\
    \label{eq:lap_betti1}
          |E| - {\rm rank}(L) &= \beta_{1}\,.
	\end{align}
\end{theorem}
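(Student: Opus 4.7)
The plan is to exploit the factorization $L = B B^{T}$ given in Eq.~\eqref{eq:Lap0}, where $B$ is the incidence matrix for any fixed orientation of the underlying graph. The central linear-algebraic fact is that over $\mathbb{R}$ one has $\rank(B B^{T}) = \rank(B)$: if $B B^{T} v = 0$ then $v^{T} B B^{T} v = \|B^{T} v\|^{2} = 0$, so $\ker(B B^{T}) = \ker(B^{T})$, and rank--nullity equates the two ranks. This single observation drives both identities.

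For Eq.~\eqref{eq:lap_betti0}, combining the above with rank--nullity gives $\dim \ker L = \dim \ker B^{T} = |V| - \rank(L)$, so it suffices to identify $\ker B^{T}$ with the space of vectors that are constant on each connected component. The condition $B^{T} v = 0$ forces $v_{i} = v_{j}$ for every edge $\{i,j\}$; propagating these equalities along paths shows $v$ must be constant on each connected component, and conversely every locally constant assignment satisfies all edge relations. Hence $\dim \ker L$ equals the number of connected components, which by definition is $\beta_{0}$.

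For Eq.~\eqref{eq:lap_betti1}, I would adopt the chain-complex viewpoint of the graph as a one-dimensional simplicial complex $0 \to C_{1} \xrightarrow{\partial_{1}} C_{0} \to 0$, in which $\partial_{1}$ is represented in the standard bases of oriented edges and vertices by the matrix $B$. Since there are no $2$-cells, the first homology coincides with the kernel of $\partial_{1}$, so $\beta_{1} = \dim \ker \partial_{1} = |E| - \rank(B) = |E| - \rank(L)$. Equivalently, one can avoid the chain-complex language altogether and invoke the Euler--Poincar\'e identity $\beta_{0} - \beta_{1} = |V| - |E|$, which combined with the first identity immediately yields $|E| - \rank(L) = \beta_{1}$.

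The main obstacle is not any deep analysis but careful bookkeeping: one must verify that the factorization $L = B B^{T}$ is genuinely independent of the chosen edge orientation (so that Eq.~\eqref{eq:Lap0}, which characterizes the Laplacian of the \emph{undirected} graph, is well-posed), and one must ensure that the solutions of $B^{T} v = 0$ are indexed precisely by connected components---neither more finely nor more coarsely. Once these combinatorial facts are cleanly established, both statements of the theorem reduce to a single application of rank--nullity together with $\rank(B B^{T}) = \rank(B)$.
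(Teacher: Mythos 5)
Your proof is correct and is essentially the paper's argument in different packaging: the identity $\rank(BB^{T})=\rank(B)$ together with $\|B^{T}v\|^{2}=\sum_{\{i,j\}\in E}|v_{i}-v_{j}|^{2}$ is exactly the quadratic-form computation the paper performs directly from $L=D-A$, and both arguments then identify the zero-modes of $L$ with the functions constant on each connected component (the paper splits this into the connected case plus block-diagonalization over components, while you handle all components at once via $\ker B^{T}$). For Eq.~(\ref{eq:lap_betti1}) the paper uses the Euler-characteristic relation $\beta_{0}-\beta_{1}=|V|-|E|$, which you also offer as an alternative; your primary route via $\beta_{1}=\dim\ker\partial_{1}=|E|-\rank(B)$ is an equivalent one-line variant.
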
 
It means that the number of exact zero eigenvalues of the Laplacian matrix agrees with $\beta_{0}$, which counts simply connected parts of the graph. The proof of this theorem is given in Appendix.~\ref{app:Betti_laplacian}.
In the case of the connected graph, the claim of this theorem have been shown using the incidence matrices \cite{Ohta:2021xty}.

The topology of a graph is also detected by the combinatorial Laplacian matrix.
\begin{theorem}[Betti numbers and combinatorial Laplacian]
The first Betti numbers $\beta_{1}$ of the graphs are related to the rank of the combinatorial Laplacian matrix.
	\begin{align}
          |E| - {\rm rank}(L_{1}) &= \beta_{1}\,.
	\end{align}
\end{theorem}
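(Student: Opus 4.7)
The plan is to reinterpret $L_{1}$ as the combinatorial Hodge Laplacian on the space of $1$-chains of the graph, viewed as a $2$-dimensional CW complex, and then identify $\ker L_{1}$ with the first homology $H_{1}$ via an orthogonal (Hodge) decomposition. First I would introduce the chain complex $C_{2} \xrightarrow{\partial_{2}} C_{1} \xrightarrow{\partial_{1}} C_{0}$ over $\mathbb{R}$, where $C_{i}$ is the real vector space freely generated by the $i$-cells (vertices for $i=0$, edges for $i=1$, faces for $i=2$), $\partial_{1}$ is represented by the incidence matrix $B$ already introduced, and $\partial_{2}$ is represented by an edge-face incidence matrix $B_{F}$ encoding the oriented boundary of each face. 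The crucial relation $\partial_{1}\partial_{2}=0$ holds because the oriented boundary of every face is a closed cycle.

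Next I would verify that the definition Eq.~(\ref{eq:Lap1}) matches the Hodge Laplacian $L_{1}=B^{T}B + B_{F} B_{F}^{T}$: the summand $B^{T}B$ contributes the diagonal $2$ (each edge has two endpoints) together with the $\pm 1$ off-diagonal entries for edges sharing a vertex, following the in/out rule, while $B_{F}B_{F}^{T}$ contributes the additional $\deg(F_{l})$ to the $(l,l)$-entry by counting the faces incident to edge $l$. Having matched the two expressions, I would use the identity
\begin{equation*}
v^{T} L_{1} v \;=\; \| B v \|^{2} + \| B_{F}^{T} v \|^{2}
\end{equation*}
to conclude that $v\in\ker L_{1}$ iff $v\in Z_{1}:=\ker\partial_{1}$ and simultaneously $v\perp B_{1}:=\im\partial_{2}$. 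The orthogonal decomposition $Z_{1}=B_{1}\oplus (Z_{1}\cap B_{1}^{\perp})$ inside $C_{1}$ (with the standard Euclidean inner product on edges) then furnishes an isomorphism $\ker L_{1}\cong Z_{1}/B_{1}=H_{1}$, so $\dim\ker L_{1}=\beta_{1}$. The rank-nullity theorem applied to $L_{1}$ finally yields $|E|-\rank(L_{1})=\dim\ker L_{1}=\beta_{1}$, which is the desired identity.

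The main obstacle is the bookkeeping in the matching step: the off-diagonal entries of $B_{F}B_{F}^{T}$ between pairs of edges that share a common face do not appear in Eq.~(\ref{eq:Lap1}), so one must either argue that these cross terms vanish for the class of graphs under consideration (for instance by pinning down a consistent orientation convention for faces so that adjacent face contributions cancel along shared edges) or, alternatively, recast the argument so that only the quadratic form $v^{T}L_{1}v$ is needed and the precise off-diagonal structure of $L_{1}$ becomes irrelevant. Once this combinatorial issue is resolved, the conclusion is an immediate consequence of the discrete Hodge decomposition.
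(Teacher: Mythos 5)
Your Hodge-theoretic strategy is the right one, and it would be a complete proof if the matrix in Eq.~(\ref{eq:Lap1}) really were the combinatorial Hodge Laplacian $B^{T}B+B_{F}B_{F}^{T}$: positive semidefiniteness, the identity $v^{\dagger}L_{1}v=\|Bv\|^{2}+\|B_{F}^{T}v\|^{2}$, the relation $\partial_{1}\partial_{2}=0$, and the orthogonal decomposition $Z_{1}=B_{1}\oplus\left(Z_{1}\cap B_{1}^{\perp}\right)$ give $\dim\ker L_{1}=\dim H_{1}=\beta_{1}$, and rank--nullity finishes the argument. For reference, the paper offers no proof of this particular theorem (only Theorems~1 and 6 are proved in the appendix), and its proof of the companion identity $|E|-\rank L=\beta_{1}$ for the vertex Laplacian goes through the Euler characteristic, a route that does not extend to $L_{1}$ without exactly the Hodge decomposition you invoke. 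So your route is the natural one.

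The obstacle you flag at the end is, however, a genuine gap, and neither of your two escape hatches closes it. First, the face cross-terms do not cancel: for the single filled face of the graph in Fig.~\ref{graph:directed}, with edge order $12,13,34,14$, the oriented boundary gives $B_{F}=(0,1,1,-1)^{T}$, so $B_{F}B_{F}^{T}$ has nonzero off-diagonal entries such as the $(13,34)$ entry equal to $+1$; adding these to $B^{T}B$ does \emph{not} reproduce Eq.~(\ref{eq:nlcm}). The definition in Eq.~(\ref{eq:Lap1}) is literally $B^{T}B+\mathrm{diag}\!\left(\deg F_{l}\right)$, not the Hodge Laplacian. Second, retreating to the quadratic form makes matters worse rather than irrelevant: for that matrix $v^{\dagger}L_{1}v=\|Bv\|^{2}+\sum_{l}\deg(F_{l})|v_{l}|^{2}$, so its kernel is the space of cycles supported entirely on edges incident to no face. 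That space is not $H_{1}$ in general --- for a cylinder built from plaquettes every edge borders at least one face, so this kernel is $\{0\}$ while $\beta_{1}=1$, and the stated identity fails for the literal definition. To make your proof stand, you must either prove the theorem for the honest Hodge Laplacian (your argument then works verbatim, but Eq.~(\ref{eq:Lap1}) has to be amended to include the $\pm1$ entries between distinct edges sharing a face), or restrict to complexes with no $2$-cells, where $L_{1}=B^{T}B$ and your argument degenerates to $\ker L_{1}=Z_{1}=H_{1}$.
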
 
It means that the number of exact zero eigenvalues of the combinatorial Laplacian matrix agrees with $\beta_{1}$, which counts the number of holes in the graph.
The combinatorial Laplacian in Eq.~(\ref{eq:nlcm}) gives $\beta_{1}=0$ since there is no hole, while The one in Eq.~(\ref{eq:lcm}) gives $\beta_{1}=1$ since there is a hole.

\subsection{Anti-symmetrized adjacency matrix}

In this subsection, we prove a novel theorem associated with the anti-symmetrized adjacency matrix

\subsubsection{Simple cases in 1D}

We here consider a directed graph corresponding to one-dimensional lattices, where we take a specific orientation as shown in Fig.~\ref{digraph:cyc_simpath}. 
We take two types of the graphs, a cycle digraph (left in Fig.~\ref{digraph:cyc_simpath}) and a simple directed path (right in Fig.~\ref{digraph:cyc_simpath}).
These two graphs are identified as a one-dimensional torus lattice $T^1$ (or circle lattice $S^1$) and a one-dimensional disk lattice (or hyperball lattice) $B^1$, respectively.
For these graphs or cartesian products of these graph elements, we have the following theorems:
\begin{figure}[htpb]
\begin{minipage}[t]{0.45\linewidth}
\centering
	\includegraphics[clip,
		height=4cm
		]{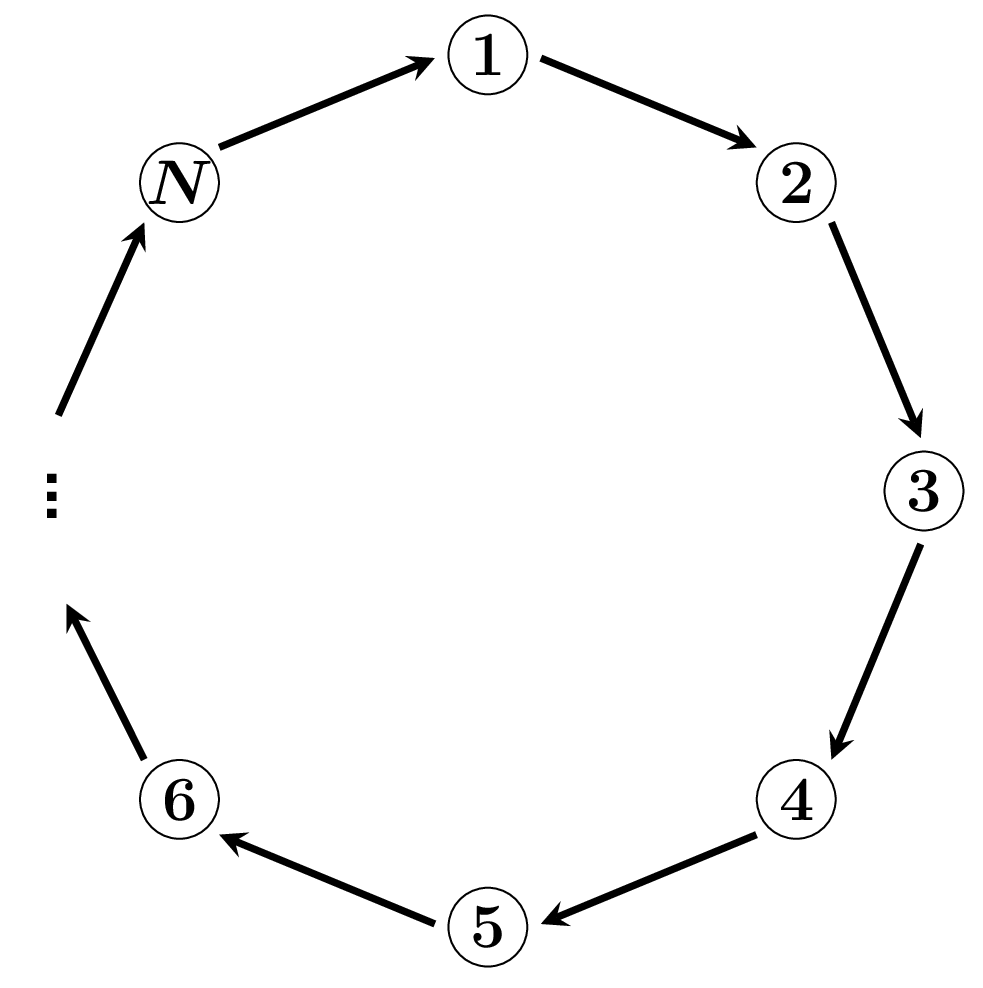}
\subcaption{A cycle digraph corresponds to 1D lattice with PBC.}
\label{digraph:cycle}
\end{minipage}
\begin{minipage}[t]{0.45\linewidth}
\centering
	\includegraphics[clip,
		height=4cm
		]{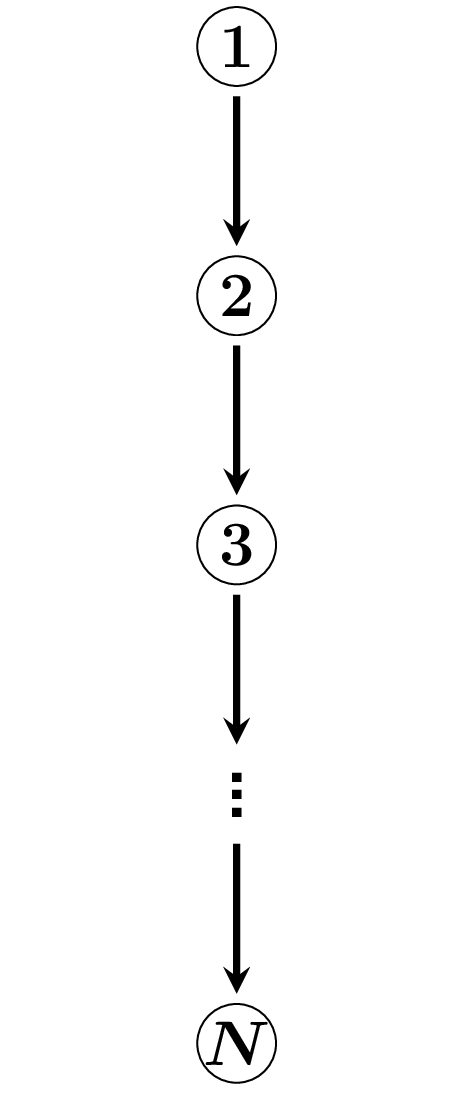}
\subcaption{A simple directed path corresponds to 1D lattice with DBC.}
\label{digraph:simple_path}
\end{minipage}
\caption{
The two graphs correspond to a cycle digraph $D_\mathrm{c}$ and a simple directed path $D_\mathrm{p}$ respectively.
}
\label{digraph:cyc_simpath}
\end{figure}

\begin{theorem}[]
\label{thm3}
For the two types of the graphs $G$ in Fig.~\ref{digraph:cyc_simpath}, which are a cycle digraph and a simple directed path, a following inequality holds;
\begin{align}
\label{ineq:thm3}
    |V| - \rank A_{\rm as}(G) \geq   \beta_{1}(G)\,.
\end{align}
This inequality can be rewritten as
\begin{equation}
\label{ineq:thm3_v2}
    \dim \left( \ker A_{\rm as}(G) \right) \geq 1 - |V| + |E|\,,
\end{equation}
since the first Betti number of the graph is $\beta_{1}(G) = 1 - |V| + |E|$.
\end{theorem}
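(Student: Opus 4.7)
The plan is to split the proof of Theorem~\ref{thm3} into the two cases corresponding to the two graphs in Fig.~\ref{digraph:cyc_simpath}, exploiting the fact that their first Betti numbers differ: $\beta_{1}(D_\mathrm{p}) = 0$ for the simple directed path and $\beta_{1}(D_\mathrm{c}) = 1$ for the cycle digraph.

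For the path $D_\mathrm{p}$ the inequality \eqref{ineq:thm3} reduces to $|V| - \rank A_{\rm as}(D_\mathrm{p}) \geq 0$, which is automatic from the trivial bound $\rank A_{\rm as}(D_\mathrm{p}) \leq |V|$; no further argument is required. The only nontrivial content of the theorem therefore lies in the cycle case, where one must show $\dim \ker A_{\rm as}(D_\mathrm{c}) \geq 1$.

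For $D_\mathrm{c}$ with the consistent orientation shown in Fig.~\ref{digraph:cycle}, I would produce an explicit vector in the kernel. Using Definition~\ref{def:ASAM}, for any $v \in \mathbb{R}^{|V|}$ one has
\begin{equation*}
(A_{\rm as}\, v)_{i} \;=\; \sum_{j:\, i \to j} v_{j} \;-\; \sum_{j:\, j \to i} v_{j}.
\end{equation*}
Evaluating on the all-ones vector $\mathbf{1} = (1,\ldots,1)^{T}$ yields $(A_{\rm as}\mathbf{1})_{i}$ equal to the out-degree minus the in-degree of vertex $v_{i}$. Since every vertex of the oriented cycle has exactly one outgoing and one incoming edge, this difference vanishes for every $i$, so $\mathbf{1} \in \ker A_{\rm as}(D_\mathrm{c})$ and the claimed bound follows.

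The step that requires thought rather than routine calculation is the recognition that a single explicit kernel vector is all that is demanded here: the inequality in \eqref{ineq:thm3} asks only for $\dim \ker A_{\rm as} \geq \beta_{1}$, with no control over the remaining spectrum. Because $A_{\rm as}$ is real and skew-symmetric its rank is automatically even, so parity of $|V|$ can in fact force an additional zero eigenvalue beyond the $\mathbf{1}$-direction. This is why the statement is phrased as an inequality rather than an equality, and it foreshadows the ``maximal'' language used for Dirac zero-modes in Sec.~\ref{sec:LT}.
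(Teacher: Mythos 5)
Your proof is correct, but it takes a genuinely different and more economical route than the paper. The paper proves Theorem~\ref{thm3} by explicitly diagonalizing $A_{\rm as}(D_\mathrm{c})$ and $A_{\rm as}(D_\mathrm{p})$ (Fourier-type bases giving eigenvalues $2i\sin\left(\tfrac{2\pi(m-1)}{|V|}\right)$ and $2i\cos\left(\tfrac{m\pi}{|V|+1}\right)$), then counts the vanishing eigenvalues case by case in the parity of $|V|$. You instead observe that the path case is the trivial bound $\rank A_{\rm as}\leq |V|$, and for the cycle you exhibit the all-ones vector in the kernel, which works because every vertex of the consistently oriented cycle has out-degree equal to in-degree; this is a clean, computation-free verification of the stated inequality. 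What the paper's heavier spectral computation buys is the \emph{exact} kernel dimension as a function of the parity of $|V|$ (two zero-modes for the even cycle, one for the odd cycle; one for the odd path, none for the even path), and these exact counts are cited verbatim in the proofs of Theorems~\ref{thm4}, \ref{thm5} and \ref{thm6}, so your argument could not simply replace the paper's proof without reworking those later sections. Your closing remark about skew-symmetry is worth developing: since $A_{\rm as}$ is real skew-symmetric its rank is even, so for the even cycle the presence of $\mathbf{1}$ in the kernel already forces $\dim\ker\geq 2$, and for the odd path the parity of $|V|$ alone forces $\dim\ker\geq 1$; this recovers the lower bounds needed for Theorem~\ref{thm4} elementarily, though the matching upper bounds would still require the spectral computation or a separate rank argument.
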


\begin{proof}
To prove this theorem, we will show that the inequality in Eq.~(\ref{ineq:thm3}) holds for any number of vertices in the cycle digraph or the simple directed path.
Firstly, we prove it for the cycle digraph, which we denote as $D_\mathrm{c}$.
Note that the minimal number of vertices in the cycle digraph is $3$ (triangle graph).
The right side in the inequality Eqs.~(\ref{ineq:thm3})(\ref{ineq:thm3_v2}) is equal to one ($\beta_1 (D_\mathrm{c}) = 1$) since the number of edges is equal to the number of vertices for the cycle digraph ($|E|=|V|$).
An anti-symmetrized adjacency matrix of the cycle digraph $D_\mathrm{c}$ is given as
\begin{equation}
    A_\mathrm{as}(D_\mathrm{c}) = 
    \begin{pmatrix}
        0	&1		&0		&		&0		&0		&-1 \\
		-1	&0		&1		&\cdots	&0		&0		&0 \\
		0	&-1		&0		&		&0		&0		&0 \\
			&\vdots	&		&\ddots	&		&\vdots	& \\
		0	&0		&0		&		&0		&1		&0 \\
		0	&0		&0		&\cdots	&-1		&0		&1 \\
		1	&0		&0		&		&0		&-1		&0
    \end{pmatrix}\,,
\end{equation}
from Def.~\ref{def:ASAM}.
The diagonalization of this matrix can be derived as 
\begin{equation}
\label{eq:cyc_diag}
    \left(U^\dagger A_\mathrm{as}(D_\mathrm{c}) U\right)_{mn}
    = 2i\sin \left( \frac{2\pi(m-1)}{|V|} \right) \delta_{mn}\,,
\end{equation}
where $m$ and $n$ are integers and run from $1$ to $|V|$.
$U$ is the unitary matrix for diagonalization and $\delta_{mn}$ is the Kronecker delta.
When the anti-symmetirized adjacency matrix has zero-modes, they are given by solutions of the equation
\begin{equation}
    \sin \left( \frac{2\pi(m-1)}{|V|} \right) = 0\,,
\end{equation}
since zero-modes are defined by $\left(U^\dagger A_\mathrm{as}(D_\mathrm{c}) U\right)_{mn} = 0$.
The solutions of this equation are $m = 1$ or $m = \frac{|V|}{2} + 1$.
Note that the solution becomes only $m = 1$ when the number of vertices is odd because $\frac{|V|}{2} + 1$ is not an integer for odd $|V|$.
Therefore, when the number of vertices is odd, $A_\mathrm{as}(D_\mathrm{c})$ has only one zero-mode.
By contrast, $A_\mathrm{as}(D_\mathrm{c})$ has two zero-modes when the number of vertices is even.
We can summarize these facts as follows:
\begin{itemize}
    \item 
    The number of vertices is odd; $|V| - \rank A_{\rm as}(D_\mathrm{c}) = 1$,
    \item 
    The number of vertices is even; $|V| - \rank A_{\rm as}(D_\mathrm{c}) = 2$.
\end{itemize}
Hence, the inequality $|V| - \rank A_{\rm as}(D_\mathrm{c}) \geq \beta_{1}(D_\mathrm{c}) = 1$ holds for the cycle digraph. 

Next, we prove it for the simple directed path, which we denote as $D_\mathrm{p}$.
The minimal number of vertices of this digraph is $2$.
The right side in the inequality Eqs.~(\ref{ineq:thm3})(\ref{ineq:thm3_v2}) is equal to zero ($\beta_1 (D_\mathrm{p}) = 0$) since the number of edges is one less than the number of vertices for the simple directed path $D_\mathrm{p}$.
An anti-symmetrized adjacency matrix of $D_\mathrm{p}$ is 
\begin{equation}
    A_\mathrm{as}(D_\mathrm{p}) = 
    \begin{pmatrix}
        0	&1		&0		&		&0		&0		&0 \\
		-1	&0		&1		&\cdots	&0		&0		&0 \\
		0	&-1		&0		&		&0		&0		&0 \\
			&\vdots	&		&\ddots	&		&\vdots	& \\
		0	&0		&0		&		&0		&1		&0 \\
		0	&0		&0		&\cdots	&-1		&0		&1 \\
		0	&0		&0		&		&0		&-1		&0
    \end{pmatrix}.
\end{equation}
The diagonalization of this matrix can be derived as 
\begin{equation}
\label{eq:simpath_diag}
    \left(V^\dagger A_\mathrm{as}(D_\mathrm{p}) V\right)_{mn}
    = 2i\cos \left( \frac{m\pi}{|V|+1} \right) \delta_{mn}\,,
\end{equation}
where $m$ and $n$ are integers and run from $1$ to $|V|$.
And $V$ is the unitary matrix different from $U$.
When this anti-symmetirized adjacency matrix has zero-modes, they are solutions of the following equation,
\begin{equation}
    \cos \left( \frac{m\pi}{|V|+1} \right) = 0\,.
\end{equation}
The solution of this equation is $m = \frac{|V|+1}{2}$.
It means that the anti-symmetrized adjacency matrix $A_\mathrm{as}(D_\mathrm{p})$ has only one zero-mode for the number of vertices being odd while $A_\mathrm{as}(D_\mathrm{p})$ has no zero-mode for the number of vertices being even.
We can summarize these facts as follows;
\begin{itemize}
    \item 
    The number of vertices is odd; $|V| - \rank A_\mathrm{as}(D_\mathrm{p}) = 1$,
    \item 
    The number of vertices is even; $|V| - \rank A_\mathrm{as}(D_\mathrm{p}) = 0$.
\end{itemize}
Hence, the inequality $|V| - \rank A_{\rm as}(D_\mathrm{p}) \geq \beta_{1}(D_\mathrm{p})= 0$ holds for the simple directed path. 
\end{proof}

\subsubsection{Special cases in 1D}

If we restrict our graphs in the previous subsection to $|V| = {\rm even}$ for $D_\mathrm{c}$ (torus) and $|V|= {\rm odd}$ for $D_\mathrm{c}$ (disk),
the theorem is modified as follows:
\begin{theorem}[]
\label{thm4}
When the graphs $G$ are the cycle digraph with even vertices or the simple directed path with odd vertices, the following equation holds;
\begin{align}
\label{eq:thm4}
    |V| - \rank A_{\rm as}(G) =  \beta_{0}(G) +  \beta_{1}(G)\,.
\end{align}
This equation can be rewritten as
\begin{equation}
\label{eq:thm4_v2}
    \dim \left(\ker A_{\rm as}(G) \right) = 2 - |V| + |E|\,.
\end{equation}
where $\beta_{0}(G)=1$.
\end{theorem}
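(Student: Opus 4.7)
The plan is to observe that the proof of Theorem~\ref{thm3} already did more than establish an inequality: it diagonalized $A_{\rm as}(G)$ explicitly in both cases and counted the exact number of zero eigenvalues. So Theorem~\ref{thm4} will follow by specializing those counts to the parity conditions in the hypothesis and checking that they match $\beta_0(G) + \beta_1(G)$.

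First I would record the Betti numbers of the two graphs: both $D_\mathrm{c}$ and $D_\mathrm{p}$ are connected, so $\beta_0 = 1$ in either case, while $\beta_1(D_\mathrm{c}) = 1$ (one independent cycle) and $\beta_1(D_\mathrm{p}) = 0$ (tree). Hence the claimed right-hand side of Eq.~(\ref{eq:thm4}) is $2$ for the cycle digraph and $1$ for the simple directed path.

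Next I would invoke the diagonalizations (\ref{eq:cyc_diag}) and (\ref{eq:simpath_diag}) established in the proof of Theorem~\ref{thm3}. For $G = D_\mathrm{c}$ with $|V|$ even, the diagonal entries $2i\sin\bigl(2\pi(m-1)/|V|\bigr)$ vanish precisely at $m=1$ and $m = |V|/2 + 1$, both integers in the range $1 \le m \le |V|$. This yields exactly two zero-modes, so $|V| - \rank A_{\rm as}(D_\mathrm{c}) = 2 = \beta_0 + \beta_1$. For $G = D_\mathrm{p}$ with $|V|$ odd, the diagonal entries $2i\cos\bigl(m\pi/(|V|+1)\bigr)$ vanish iff $m\pi/(|V|+1) = \pi/2$, i.e.\ $m = (|V|+1)/2$, which is an integer precisely because $|V|$ is odd; this gives exactly one zero-mode, so $|V| - \rank A_{\rm as}(D_\mathrm{p}) = 1 = \beta_0 + \beta_1$. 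In both cases the equality in Eq.~(\ref{eq:thm4}) holds.

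Finally, to obtain the rewritten form in Eq.~(\ref{eq:thm4_v2}), I would substitute the edge counts $|E| = |V|$ for $D_\mathrm{c}$ and $|E| = |V| - 1$ for $D_\mathrm{p}$ into $2 - |V| + |E|$, which yields $2$ and $1$ respectively, matching $\dim(\ker A_{\rm as})$ computed above. There is no real obstacle here: the only subtle point is the integrality check on the parities of $|V|$, which is exactly what the hypothesis selects for, and this is the reason the generic inequality of Theorem~\ref{thm3} tightens to an equality.
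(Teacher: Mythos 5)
Your proposal is correct and follows essentially the same route as the paper's own proof: both specialize the explicit zero-mode counts from the diagonalizations in Theorem~\ref{thm3} to the stated parities and compare with $\beta_0(G)+\beta_1(G)$ computed from $|E|-|V|$. Your version merely re-derives the eigenvalue conditions rather than quoting the rank values, which adds a small amount of self-containedness but no new idea.
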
 

\begin{proof}
    Firstly, we prove it for the cycle digraph $D_\mathrm{c}$ with even vertices.
    From the proof in Thm.~\ref{thm3}, the rank of anti-symmetrized adjacency matrix of the cycle digraph is $|V|-2$ when the number of vertices is even.
    On the other hand, we have $\beta_{0}(D_\mathrm{c}) + \beta_{1}(D_\mathrm{c}) =2 - |V| +|E| = 2$ since the number of edges is equal to the number of vertices.
    Hence, the equation $|V| - \rank A_{\rm as}(D_\mathrm{c}) = 2 =\beta_{0}(D_\mathrm{c}) + \beta_{1}(D_\mathrm{c})$ holds in the cycle digraph with even vertices.
    
    Secondly, we prove the theorem for the simple directed path $D_\mathrm{p}$ with odd vertices.
    From the proof in Thm.~\ref{thm3}, the rank of anti-symmetrized adjacency matrix of the simple directed path is $|V|-1$ when the number of vertices is odd.
    On the other hand, we have $\beta_{0}(D_\mathrm{p}) + \beta_{1}(D_\mathrm{c}) =2 - |V| +|E|= 1$ since the number of edges is one less than the number of vertices.
    Hence, the equation $|V| - \rank A_{\rm as}(D_\mathrm{c}) = 1 = \beta_{0}(D_\mathrm{c}) + \beta_{1}(D_\mathrm{c})$ holds for the simple directed path with odd vertices.
\end{proof}

If we consider the cycle digraph or the simple directed path with an arbitrary number of vertices, the following inequality holds;
\begin{equation}
    \dim \left(\ker A_{\rm as}(G) \right) \leq \beta_{0}(G) +  \beta_{1}(G) = 2 - |V| + |E|\,.
\end{equation}


\subsubsection{Higher dimensions}

We consider graphs $G$ defined as a cartesian-product $\Box$ of only the cycle digraph $D_\mathrm{c}$ and the simple directed path $D_\mathrm{p}$, that is
\begin{equation}
\label{eq:D-dim_graph}
    G \equiv G_1 \cprod G_2 \cprod \cdots \cprod G_D \,, 
\end{equation}
where $G_\mu \in \left\{ D_\mathrm{c}, D_\mathrm{p} \right\}$ for $\mu \in \left\{1,2,\cdots,D\right\}$.
In this paper, we term the graphs in Eq.~(\ref{eq:D-dim_graph}) as ``$D$-dimensional directed graphs".
As some examples, $G = D_\mathrm{c} \cprod D_\mathrm{c}$, $G = D_\mathrm{p} \cprod D_\mathrm{p}$, and $G = D_\mathrm{c} \cprod D_\mathrm{p}$ are depicted in Fig.~\ref{graph:examples}.
\begin{figure}[htpb]
\begin{minipage}[t]{\linewidth}
\centering
	\includegraphics[
    scale=.4
    ]{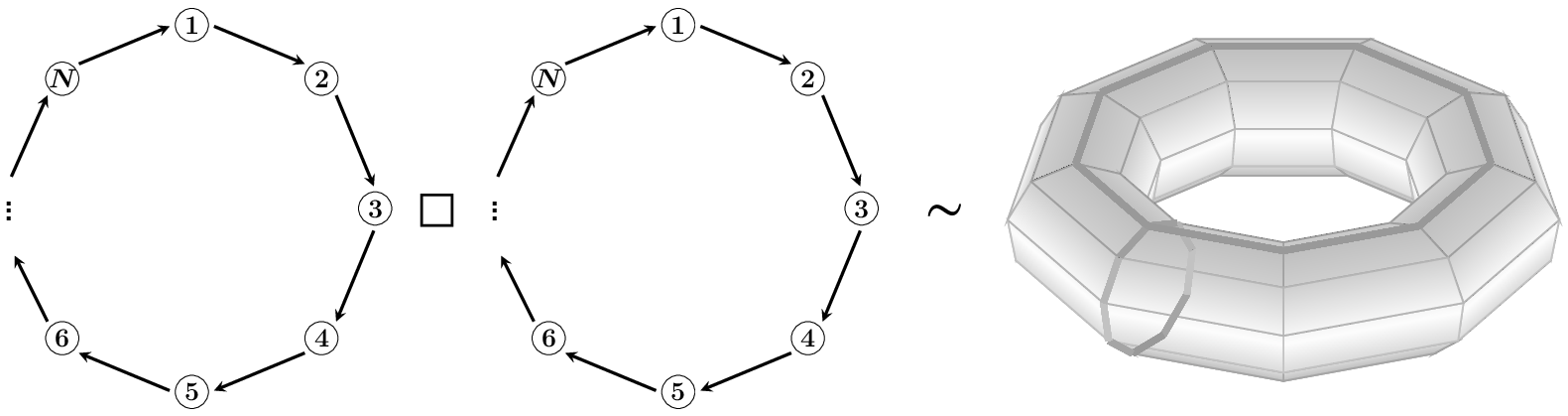}
\subcaption{$G = D_\mathrm{c} \Box D_\mathrm{c}$ is identified as a two dimensional torus with directed edge.}
\label{graph:cc}
\end{minipage}\vspace{5pt}\\
\begin{minipage}[t]{\linewidth}
\centering
	\includegraphics[
    scale=.4
    ]{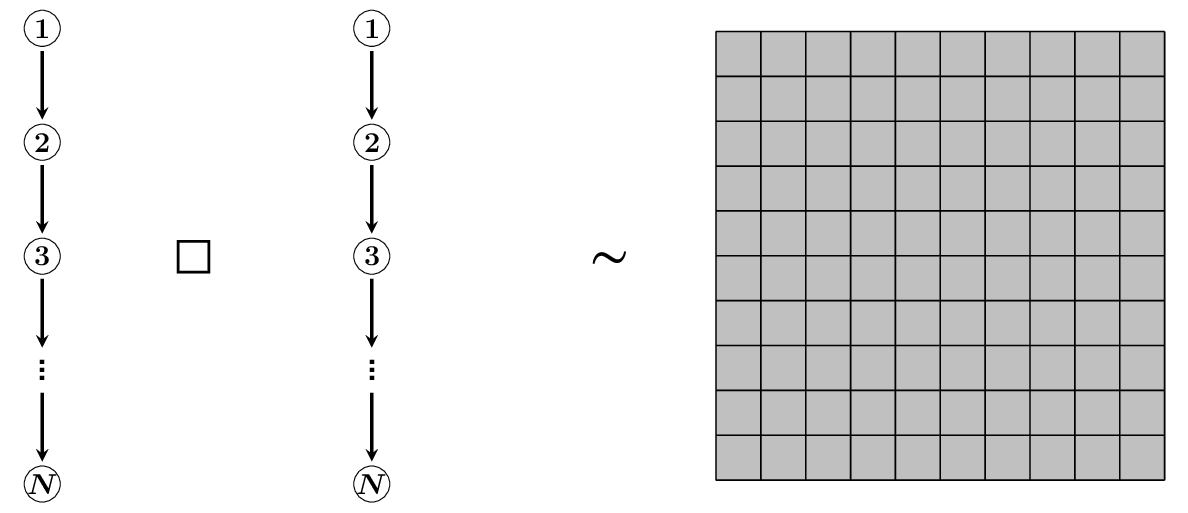}
\subcaption{$G = D_\mathrm{p} \Box D_\mathrm{p}$ is identified as a two dimensional disk with directed edge.}
\label{graph:pp}
\end{minipage}\vspace{5pt}\\
\begin{minipage}[t]{\linewidth}
\centering
	\includegraphics[
    scale=.4
    ]{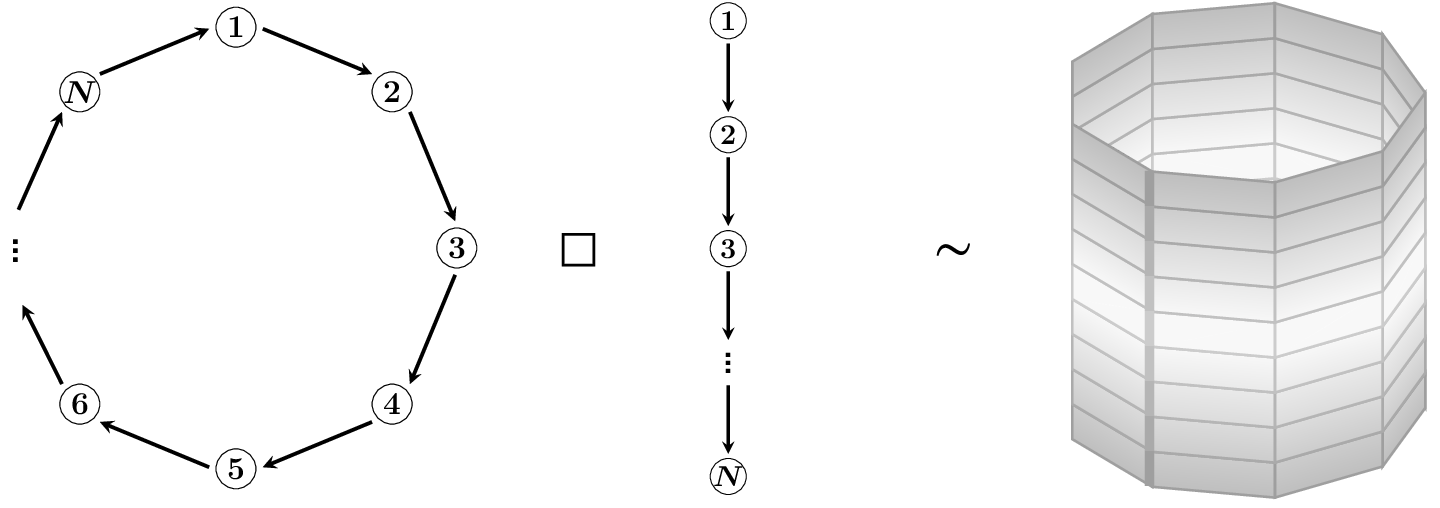}
\subcaption{$G = D_\mathrm{c} \Box D_\mathrm{p}$ is identified as an cylinder with directed edge.}
\label{graph:pp}
\end{minipage}
\caption{
Three examples for $G$ in Eq.~(\ref{eq:D-dim_graph}) and manifolds corresponding to them are depicted.
}
\label{graph:examples}
\end{figure}
For such directed graphs, we define the following matrix.
\begin{equation}
\label{eq:D_matrix}
\begin{split}
    \mathcal{D}(G) &\equiv \sum_{\mu=1}^{D} 
    \left\{\left( \bigot_{\nu=1}^{D-\mu} \bm{1}_{|V|_{D+1-\nu}} \right)
    \otimes A_\mathrm{as} (G_\mu) \otimes
    \left( \bigot_{\rho=1}^{\mu-1} \bm{1}_{|V|_\rho} \right)\right\} \ot \gamma_\mu\\
    &= \sum_{\mu=1}^{D} \left(A_\mathrm{as}\right)_\mu \ot \gamma_\mu \,,
\end{split}
\end{equation}
where $|V|_\mu$ stands for the number of vertices in $G_\mu$ and $\bm{1}_{|V|_\mu}$ is the identity matrix of size $|V|_\mu$.
And $\gamma_{\mu}$ is $D$-dimensional gamma matrices satisfying Clifford algebra as 
$\{\gamma_{\mu},\gamma_{\nu} \} = 2\delta_{\mu\nu}$.
The symbol $\ot$ is Kronecker product.

Based on Thm.~\ref{thm3}, we claim the following theorem on the rank of $\mathcal{D}(G)$ for graphs $G$ satisfying $|V| = {\rm even}$ for the cycle digraph and $|V|= {\rm odd}$ for the simple directed path.
\begin{theorem}[]
\label{thm5}
For the graphs $G$ constructed a cartesian-product of only the cycle digraph with even vertices and the simple directed path with odd vertices, the following equation holds;
\begin{equation}
\label{eq:thm5}
    |V|\cdot \rank \gamma - \rank \mathcal{D}(G) = \rank \gamma \cdot \prod_{\mu=1}^{D} \Bigl\{ \beta_0(G_\mu) + \beta_1(G_\mu) \Bigr\}\,,
\end{equation}
where $|V|$ is the number of vertices in $G$ and $\rank \gamma$ is the rank of gamma matrices.
When the graph $G$ is constructed by $d$ cycle digraphs and $(D-d)$ simple directed paths, this equation is rewritten as
\begin{equation}
\label{eq:thm5_v2}
    \frac{\dim \left(\ker \mathcal{D}(G) \right)}{\rank \gamma} = 2^{d}\,.
\end{equation}
\end{theorem}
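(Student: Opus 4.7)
The plan is to reduce the rank computation of $\mathcal{D}(G)$ to the one-dimensional kernel statements of Thm.~\ref{thm4} by squaring $\mathcal{D}(G)$ and exploiting the Clifford algebra. First I would observe that the factors $(A_\mathrm{as})_\mu$ and $(A_\mathrm{as})_\nu$ for $\mu \neq \nu$ act nontrivially on different tensor factors of $\bigotimes_\mu \mathbb{C}^{|V|_\mu}$, so they commute, while the gamma matrices anticommute ($\gamma_\mu\gamma_\nu = -\gamma_\nu\gamma_\mu$ for $\mu\neq\nu$). Pairing the $(\mu,\nu)$ and $(\nu,\mu)$ terms in the expansion of $\mathcal{D}(G)^2$ then makes the off-diagonal contributions cancel, leaving
\begin{equation}
\mathcal{D}(G)^{2} \;=\; \sum_{\mu=1}^{D} (A_\mathrm{as})_\mu^{2} \,\otimes\, \bm{1}_{\gamma}\,.
\end{equation}

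Next I would use that each $A_\mathrm{as}(G_\mu)$ is anti-Hermitian (it is real anti-symmetric by Def.~\ref{def:ASAM}), so every $(A_\mathrm{as})_\mu$ is anti-Hermitian and hence $\mathcal{D}(G)$ is too. In particular $\ker\mathcal{D}(G) = \ker\mathcal{D}(G)^{2}$, and each summand $-(A_\mathrm{as})_\mu^{2}\otimes\bm{1}_\gamma$ on the right-hand side is positive semi-definite. The kernel of a sum of positive semi-definite operators is the intersection of their kernels, so
\begin{equation}
\ker \mathcal{D}(G) \;=\; \bigcap_{\mu=1}^{D} \ker\!\left((A_\mathrm{as})_\mu\otimes \bm{1}_{\gamma}\right).
\end{equation}
Because $(A_\mathrm{as})_\mu$ acts as the identity on every tensor slot $\nu\neq \mu$, this intersection factorizes as a Kronecker product of the one-dimensional kernels:
\begin{equation}
\ker \mathcal{D}(G) \;=\; \left(\bigotimes_{\mu=1}^{D} \ker A_\mathrm{as}(G_\mu)\right) \otimes \mathbb{C}^{\rank\gamma}\,.
\end{equation}

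Finally I would invoke Thm.~\ref{thm4}, which under the hypotheses gives $\dim \ker A_\mathrm{as}(G_\mu) = \beta_0(G_\mu)+\beta_1(G_\mu)$ for each factor ($=2$ for an even cycle digraph, $=1$ for an odd simple directed path). Taking dimensions in the last display yields
\begin{equation}
\dim\ker\mathcal{D}(G) \;=\; \rank\gamma \cdot \prod_{\mu=1}^{D}\!\bigl(\beta_0(G_\mu)+\beta_1(G_\mu)\bigr)\,,
\end{equation}
which rearranges into the rank identity \eqref{eq:thm5}. The second form \eqref{eq:thm5_v2} then follows by substituting $\beta_0+\beta_1 = 2$ for the $d$ cycle factors and $\beta_0+\beta_1 = 1$ for the $D-d$ path factors.

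I expect the main obstacle to be the cancellation step: one must justify carefully that $(A_\mathrm{as})_\mu(A_\mathrm{as})_\nu$ for $\mu\neq\nu$ is \emph{symmetric} under the swap $\mu\leftrightarrow\nu$ (they commute as tensor-extended operators on different slots) while $\gamma_\mu\gamma_\nu$ is antisymmetric, so that the double sum over $\mu\neq\nu$ vanishes term by term. A secondary subtlety is the step $\ker\mathcal{D}(G)=\ker\mathcal{D}(G)^2$, which needs the anti-Hermiticity of $\mathcal{D}(G)$ (equivalently, normality suffices); once these two points are nailed down, the rest is a direct application of Thm.~\ref{thm4}.
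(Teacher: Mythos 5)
Your proof is correct, and it takes a genuinely different route from the paper. The paper proves Thm.~\ref{thm5} by explicitly diagonalizing each $A_\mathrm{as}(G_\mu)$ (eigenvalues $2i\sin\!\left(2\pi(m-1)/|V|_\mu\right)$ for $D_\mathrm{c}$ and $2i\cos\!\left(m\pi/(|V|_\mu+1)\right)$ for $D_\mathrm{p}$), conjugating $\mathcal{D}(G)$ by the tensor product of the diagonalizing unitaries, and then invoking the linear independence of the $\gamma$-matrices to force each trigonometric coefficient to vanish separately, which yields exactly $2^d$ admissible index tuples. Your squaring argument replaces that computation with structure: the Clifford cancellation gives $\mathcal{D}(G)^2=\sum_\mu (A_\mathrm{as})_\mu^2\otimes\bm{1}_\gamma$, and the positive semi-definiteness of each $-(A_\mathrm{as})_\mu^2\otimes\bm{1}_\gamma$ plays precisely the role that "linear independence of the $\gamma$'s" plays in the paper, turning $\ker\mathcal{D}(G)$ into an intersection of slot-wise kernels and hence a Kronecker product of the one-dimensional kernels from Thm.~\ref{thm4}. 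What each buys: the paper's version exhibits the explicit momenta of the zero-modes (the doubler locations), which it reuses in Thm.~\ref{thm6} and in the Wilson-term discussion; yours is basis-free, never needs the explicit spectra beyond the kernel dimensions, and generalizes verbatim to any cartesian product of digraphs whose anti-symmetrized adjacency matrices are normal with known nullities. Two small points to nail down in a written version: anti-Hermiticity of $\mathcal{D}(G)$ requires Hermitian $\gamma_\mu$ (standard for the Euclidean algebra $\{\gamma_\mu,\gamma_\nu\}=2\delta_{\mu\nu}$ the paper uses, but worth stating), and the step $\ker (A_\mathrm{as})_\mu^2=\ker (A_\mathrm{as})_\mu$ again uses normality of $A_\mathrm{as}$; you have already flagged both, so the argument is complete.
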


\begin{proof}
We consider the graph $G$ constructed as a cartesian-product of $d$ cycle digraphs and $(D-d)$ simple directed paths.
We divide $\mu \in \left\{1,2,\cdots,D\right\}$ in Eq.~(\ref{eq:D-dim_graph}) into two parts as $S_\mathrm{c} \equiv \left\{ \mu \setmid G_\mu = D_\mathrm{c} \right\}$ and $S_\mathrm{p} \equiv \left\{ \mu \setmid G_\mu = D_\mathrm{p} \right\}$ in order to distinguish the cycle digraph $D_\mathrm{c}$ and the simple directed path $D_\mathrm{p}$.
The number of dimensions of $S_\mathrm{c}$ and $S_\mathrm{p}$, which are denoted as $|S_\mathrm{c}|$ and $|S_\mathrm{p}|$, are $|S_\mathrm{c}| = d$ and $|S_\mathrm{p}| = D-d$, respectively.
Then, the right side in Eq.~(\ref{eq:thm5}) is
\begin{equation}
\begin{split}
    &\rank \gamma \cdot \prod_{\mu=1}^{D} \Bigl\{ \beta_0(G_\mu) + \beta_1(G_\mu) \Bigr\}\\
    &\quad= \rank \gamma \cdot \prod_{\mu_\mathrm{c} \in S_\mathrm{c}} 
        \Bigl\{ \beta_0(D_\mathrm{c}) + \beta_1(D_\mathrm{c}) \Bigr\}
    \prod_{\mu_\mathrm{p} \in S_\mathrm{p}} 
        \Bigl\{ \beta_0(D_\mathrm{p}) + \beta_1(D_\mathrm{p}) \Bigr\}\\
    &\quad= \rank \gamma \cdot 
        \prod_{\mu_\mathrm{c} \in S_\mathrm{c}} 2 \cdot 
        \prod_{\mu_\mathrm{p} \in S_\mathrm{p}} 1\\
    &\quad= \rank \gamma \cdot 2^{d}\,,
\end{split}
\end{equation}
where $\beta_0(D_\mathrm{c}) + \beta_1(D_\mathrm{c}) = 2$ and $\beta_0(D_\mathrm{p}) + \beta_1(D_\mathrm{p}) = 1$.
To prove the theorem in Eq.~(\ref{eq:thm5}), what we have to do is to show that the the left side of Eq.~(\ref{eq:thm5}) satisfies $|V|\cdot \rank \gamma - \rank \mathcal{D}(G) = \rank \gamma \cdot 2^{d}$.

We will derive the number of zero-modes of $\mathcal{D}(G)$.
The diagonalization of $\mathcal{D}(G)$ can be derived as
\begin{equation}
    \mathcal{U}^\dagger \mathcal{D}(G)\, \mathcal{U} = \sum_{\mu=1}^{D} 
    \left\{\left( \bigot_{\nu=1}^{D-\mu} \bm{1}_{|V|_{D+1-\nu}} \right)
    \otimes \left( U_\mu^\dagger A_\mathrm{as} (G_\mu) U_\mu \right) \otimes
    \left( \bigot_{\rho=1}^{\mu-1} \bm{1}_{|V|_\rho} \right)\right\} \ot \gamma_\mu\,,\\
\end{equation}
where $\mathcal{U}$ is an unitary matrix defined as $\mathcal{U} \equiv \bigot_{\mu=1}^{D} U_\mu$ and $U_\mu$ is the unitary matrix for the diagonalization of $A_\mathrm{as}(G_\mu)$.
Consequently, the eigenvalues of $\mathcal{D}(G)$ are obtained as
\begin{equation}
\label{eq:diag_D}
\begin{split}
    &\left( \mathcal{U}^\dagger \mathcal{D}(G)\, \mathcal{U} \right)_{mn} \\
    &\quad= \sum_{\mu=1}^{D} \gamma_\mu
    \left\{ \prod_{\nu=1}^{D-\mu} \prod_{\rho=1}^{\mu-1}
        \left( \bm{1}_{|V|_{D+1-\nu}} \right)_{m_{D+1-\nu} n_{D+1-\nu}}
        \left( \bm{1}_{|V|_\rho} \right)_{m_\rho n_\rho}
    \left( U_\mu^\dagger A_\mathrm{as} (G_\mu) U_\mu \right)_{m_\mu n_\mu} \right\}\\
    &\quad= \sum_{\mu=1}^{D} \gamma_\mu
        \left( U_\mu^\dagger A_\mathrm{as} (G_\mu) U_\mu \right)_{m_\mu n_\mu}
        \left(\prod_{\nu=1}^{D-\mu} \prod_{\rho=1}^{\mu-1}
            \delta_{m_{D+1-\nu} n_{D+1-\nu}} \delta_{m_\rho n_\rho} \right)\\
    &\quad= \sum_{\mu_\mathrm{c} \in S_\mathrm{c}} \gamma_{\mu_\mathrm{c}}
        \left( 
            U_{\mu_\mathrm{c}}^\dagger A_\mathrm{as} (D_\mathrm{c}) U_{\mu_\mathrm{c}} 
        \right)_{m_{\mu_\mathrm{c}} n_{\mu_\mathrm{c}}}
        \left(\prod_{\nu=1}^{D-\mu_\mathrm{c}} \prod_{\rho=1}^{\mu_\mathrm{c}-1}
            \delta_{m_{D+1-\nu} n_{D+1-\nu}} \delta_{m_\rho n_\rho} \right)\\
    &\qquad+ \sum_{\mu_\mathrm{p} \in S_\mathrm{p}} \gamma_{\mu_\mathrm{p}}
        \left( 
            U_{\mu_\mathrm{p}}^\dagger A_\mathrm{as} (D_\mathrm{p}) U_{\mu_\mathrm{p}} 
        \right)_{m_{\mu_\mathrm{p}} n_{\mu_\mathrm{p}}}
        \left(\prod_{\nu=1}^{D-\mu_\mathrm{p}} \prod_{\rho=1}^{\mu_\mathrm{p}-1}
            \delta_{m_{D+1-\nu} n_{D+1-\nu}} \delta_{m_\rho n_\rho} \right)\,,
\end{split}
\end{equation}
where $n_\alpha$ and $m_\alpha$ range from 1 to $|V|_\alpha$.
And $m$ and $n$ are
\begin{equation}
\begin{split}
    m 
    = m_1 + \sum_{\mu=2}^{D} \left( \prod_{\nu=1}^{\mu-1}|V|_{\nu} \left( m_\mu -1 \right) \right)\,,\quad
    n 
    = n_1 + \sum_{\mu=2}^{D} \left( \prod_{\nu=1}^{\mu-1}|V|_\nu \left( n_\mu -1 \right) \right)\,,
\end{split}
\end{equation}
respectively.
Here we used $\left( \bm{1}_{|V|_\alpha} \right)_{m_\alpha n_\alpha} = \delta_{m_\alpha n_\alpha}$ and the properties of Kronecker product, which is $\left( A \ot B \right)_{Q(i-1)+k,Q(j-1)+l} = A_{ij}B_{kl}$ for $P$-square matrix $A$ and $Q$-square matrix $B$.
Since the components of $U_\mu^\dagger A_\mathrm{as} (G_\mu) U_\mu$ are
\begin{equation}
    \left( U_\mu^\dagger A_\mathrm{as} (G_\mu) U_\mu \right)_{m_\mu n_\mu} = 
    \begin{cases}
        2i\sin \left( \frac{2\pi(m_\mu-1)}{|V|_\mu} \right) \delta_{m_\mu n_\mu}
            &\left( G_\mu=D_\mathrm{c} \right)\\
        2i\cos \left( \frac{m_\mu \pi}{|V|_\mu +1} \right) \delta_{m_\mu n_\mu}
            &\left( G_\mu=D_\mathrm{p} \right)
    \end{cases}\,.
\end{equation}
From the proof of Thm.~\ref{thm3}, the equation in Eq.~(\ref{eq:diag_D}) is 
\begin{equation}
\begin{split}
    &\left( \mathcal{U}^\dagger \mathcal{D}(G)\, \mathcal{U} \right)_{mn}\\
    &\quad= \sum_{\mu_\mathrm{c} \in S_\mathrm{c}} 2i \gamma_{\mu_\mathrm{c}} 
        \sin \left( \frac{2\pi(m_{\mu_\mathrm{c}}-1)}{|V|_{\mu_\mathrm{c}}} \right) 
        \left(\prod_{\nu=1}^{D-\mu_\mathrm{c}} \prod_{\rho=1}^{\mu_\mathrm{c}-1}
            \delta_{m_{D+1-\nu} n_{D+1-\nu}} \delta_{m_{\mu_\mathrm{c}} n_{\mu_\mathrm{c}}} \delta_{m_\rho n_\rho} \right)\\
    &\qquad+ \sum_{\mu_\mathrm{p} \in S_\mathrm{p}} 2i \gamma_{\mu_\mathrm{p}} 
        \cos \left( \frac{m_{\mu_\mathrm{p}} \pi}{|V|_{\mu_\mathrm{p}} +1} \right) 
        \left(\prod_{\nu=1}^{D-\mu_\mathrm{p}} \prod_{\rho=1}^{\mu_\mathrm{p}-1}
            \delta_{m_{D+1-\nu} n_{D+1-\nu}} \delta_{m_{\mu_\mathrm{p}} n_{\mu_\mathrm{p}}} \delta_{m_\rho n_\rho} \right)\\
    &\quad= 2i \left\{
        \sum_{\mu_\mathrm{c} \in S_\mathrm{c}} \gamma_{\mu_\mathrm{c}}
        \sin \left( \frac{2\pi(m_{\mu_\mathrm{c}}-1)}{|V|_{\mu_\mathrm{c}}} \right)
        +
        \sum_{\mu_\mathrm{p} \in S_\mathrm{p}} \gamma_{\mu_\mathrm{p}} 
        \cos \left( \frac{m_{\mu_\mathrm{p}} \pi}{|V|_{\mu_\mathrm{p}} +1} \right) 
    \right\} \delta_{mn}\,,
\end{split}
\end{equation}
where $\delta_{mn} = \prod_{\mu=1}^{D}\delta_{m_\mu n_\mu}$.
However, since $\gamma$-matrices are linearly independent, the coefficient of each $\gamma$-matrices must be zero.
Finally, the conditions for the diagonal matrix to have zero-modes are below
\begin{equation}
    \sin \left( \frac{2\pi(m_{\mu_\mathrm{c}}-1)}{|V|_{\mu_\mathrm{c}}} \right)
    = \cos \left( \frac{m_{\mu_\mathrm{p}} \pi}{|V|_{\mu_\mathrm{p}}+1} \right) = 0\,.
\end{equation}
The solutions of this equation are $m_{\mu_\mathrm{c}} = 1,\, m_{\mu_\mathrm{p}} = \frac{|V|_{\mu_\mathrm{p}}+1}{2}$ or $m_{\mu_\mathrm{c}} = \frac{|V|_{\mu_\mathrm{c}}}{2} + 1,\ m_{\mu_\mathrm{p}} = \frac{|V|_{\mu_\mathrm{p}}+1}{2}$ for $\mu_\mathrm{c} \in S_\mathrm{c}$ and $\mu_\mathrm{p} \in S_\mathrm{p}$.
Note that the number of solutions is $2^{|S_\mathrm{c}|} = 2^d$ since the assumptions of theorem are $|V|_{\mu_\mathrm{c}} = {\rm even}$ and $|V|_{\mu_\mathrm{p}} = {\rm odd}$.
The number of zero-modes of $\mathcal{D}(G)$ is $\rank \gamma \cdot 2^d$ since $\left( \mathcal{U}^\dagger \mathcal{D}(G)\, \mathcal{U} \right)_{mn}$ contains $\gamma$ matrices.
As a result, the rank of $\mathcal{D}(G)$ is $|V| \cdot \rank \gamma - \rank \gamma \cdot 2^d$.
Therefore, a following equation holds;
\begin{equation}
\begin{split}
    |V| \cdot \rank \gamma - \rank \mathcal{D}(G)
    &= |V| \cdot \rank \gamma - \left( |V| \cdot \rank \gamma - \rank \gamma \cdot 2^M \right)\\
    &= \rank \gamma \cdot 2^{d} 
    = \rank \gamma \cdot \prod_{\mu=1}^{D} \Bigl\{ \beta_0(G_\mu) + \beta_1(G_\mu) \Bigr\} \,.
\end{split}
\end{equation}
And an equation $\frac{\dim \left(\ker \mathcal{D}(G) \right)}{\rank \gamma} = 2^d$ also holds since $\dim \left(\ker \mathcal{D}(G) \right) = |V| \cdot \rank \gamma - \rank \mathcal{D}(G)$.
\end{proof}

For later convenience, we define ``the number of Dirac zero-modes" as the number of zero-modes of $(A_{\rm as})_{\mu} \times \gamma_\mu$ divided by the rank of gamma matrices.
Then, the theorem is rephrased as
{\it the number of Dirac zero-modes of a free and massless fermion is equivalent to the sum of Betti numbers $\beta_{0}+\beta_{1}+...+\beta_{D}$ for lattices (graphs) $G \equiv G_1 \cprod G_2 \cprod \cdots \cprod G_D$ with $G_\mu \in \left\{ D_\mathrm{c}, D_\mathrm{p} \right\}$
with $|V| = {\rm even}$ for $D_\mathrm{c}$ and $|V|={\rm odd}$ for $D_\mathrm{p}$.}

So far, we have restricted our conditions only to the case with $|V| = {\rm even}$ for $D_\mathrm{c}$ and $|V|={\rm odd}$ for $D_\mathrm{p}$.
We now consider more generic cases of $|V|$.
As we have shown, when we have $|V| \not= {\rm even}$ for $D_\mathrm{c}$ or $|V| \not={\rm odd}$ for $D_\mathrm{p}$, the number of zero-modes of $A_{\rm as}$ decreases from the maximum.
Therefore, we have the following theorem:
\begin{theorem}[]
\label{thm6}
For the graphs $G$ constructed a cartesian-product of only the cycle digraph with any vertices and the simple directed path with any vertices, the following equation holds;
\begin{equation}
\label{eq:thm6}
    |V|\cdot \rank \gamma - \rank \mathcal{D}(G) \leq \rank \gamma \cdot \prod_{\mu=1}^{D} \Bigl\{ \beta_0(G_\mu) + \beta_1(G_\mu) \Bigr\}\,,
\end{equation}
where $|V|$ is the number of vertices in $G$ and $\rank \gamma$ is the rank of gamma matrices.
When the graph $G$ is constructed by $d$ cycle digraphs and $(D-d)$ simple directed paths, this equation is rewritten as
\begin{equation}
\label{eq:thm6_v2}
    \frac{\dim \left(\ker \mathcal{D}(G) \right)}{\rank \gamma} \leq 2^{d}\,.
\end{equation}
\end{theorem}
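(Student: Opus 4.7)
The plan is to reuse the diagonalization argument from the proof of Thm.~\ref{thm5} and to relax only the final counting step. First I would apply the same unitary $\mathcal{U} = \bigotimes_{\mu=1}^{D} U_\mu$, where each $U_\mu$ diagonalises $A_\mathrm{as}(G_\mu)$ as in Eqs.~(\ref{eq:cyc_diag}) and (\ref{eq:simpath_diag}). This step is purely algebraic and does not invoke any parity condition on $|V|_\mu$, so the expression for $(\mathcal{U}^\dagger \mathcal{D}(G)\,\mathcal{U})_{mn}$ derived in Thm.~\ref{thm5} carries over verbatim, with $S_\mathrm{c}$ and $S_\mathrm{p}$ again denoting the sets of cycle-digraph and simple-directed-path factors in the cartesian product.

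Next, I would read off the zero-mode condition. Using the linear independence of $\gamma_1,\ldots,\gamma_D$, the vanishing of $(\mathcal{U}^\dagger \mathcal{D}(G)\,\mathcal{U})_{mm}$ decouples into the factorised system
\[
\sin\!\left(\frac{2\pi(m_{\mu_\mathrm{c}}-1)}{|V|_{\mu_\mathrm{c}}}\right) = 0 \quad (\mu_\mathrm{c}\in S_\mathrm{c}), \qquad \cos\!\left(\frac{m_{\mu_\mathrm{p}} \pi}{|V|_{\mu_\mathrm{p}}+1}\right) = 0 \quad (\mu_\mathrm{p}\in S_\mathrm{p}).
\]
From the 1D analysis in Thm.~\ref{thm3}, each cycle-digraph factor contributes at most two admissible values of $m_{\mu_\mathrm{c}}$ (two when $|V|_{\mu_\mathrm{c}}$ is even, one otherwise), and each simple-directed-path factor contributes at most one admissible value of $m_{\mu_\mathrm{p}}$ (one when $|V|_{\mu_\mathrm{p}}$ is odd, zero otherwise). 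Multiplying these per-factor bounds yields at most $2^{d}$ admissible tuples $(m_1,\dots,m_D)$, with equality occurring precisely in the setting of Thm.~\ref{thm5}.

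Finally, since each admissible tuple contributes a full $\rank\gamma$-dimensional subspace to $\ker \mathcal{D}(G)$, I obtain $\dim \ker \mathcal{D}(G) \leq \rank \gamma \cdot 2^{d}$. Together with $\beta_0(D_\mathrm{c})+\beta_1(D_\mathrm{c}) = 2$ and $\beta_0(D_\mathrm{p})+\beta_1(D_\mathrm{p}) = 1$ this is exactly (\ref{eq:thm6}), and the reformulation (\ref{eq:thm6_v2}) follows immediately. I do not anticipate a serious obstacle: the argument is a case-by-case weakening of the equality case of Thm.~\ref{thm5}, and the only mild subtlety is to note that when some simple-directed-path factor has even $|V|$ its per-factor count is zero, so the whole product collapses and the inequality becomes the trivial bound $\dim \ker \mathcal{D}(G) = 0 \leq \rank \gamma \cdot 2^{d}$.
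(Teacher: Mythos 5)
Your proposal is correct and follows essentially the same route as the paper's proof in the appendix: diagonalize $\mathcal{D}(G)$ with $\mathcal{U}=\bigotimes_\mu U_\mu$, use the linear independence of the $\gamma$-matrices to decouple the zero-mode condition into one trigonometric equation per factor, and count admissible tuples per factor (at most $2$ per cycle digraph, at most $1$ per simple directed path, zero for an even-vertex path so the product collapses). The paper merely makes the same case analysis explicit by splitting the factors into four sets according to type and parity; the content is identical.
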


\begin{proof}
The detailed proof is shown in Appendix.~\ref{app:proof_thm6}.
\end{proof}

We obtain two important consequences from these theorems as follows:
\begin{itemize}
    \item 
    There is the maximum number of Dirac zero-modes of $\mathcal{D}(G)$ for $G$,
    \item 
    The maximum number of Dirac zero-modes of $\mathcal{D}(G)$ is two to the power of the number of cycle digraphs constructing the graph $G$.
\end{itemize}

We discuss the relation of Thm.~\ref{thm5} and Thm.~\ref{thm6} to graph topology.
If a $D$-dimensional manifold $M$ is a product space of the circle $T^1$ in $d$ dimensions and the line segment $B^1$ in $D-d$ dimensions, 
the summation of Betti number $\beta_r(M)$ over $0 \leq r \leq D$ is 
\begin{equation}
\label{eq:sum_betti}
    \sum_{r=0}^{D} \beta_r(M) 
    = \sum_{r=0}^{D} \rank H_r(M)
    = 2^d\,,
\end{equation}
as shown in Appendix.~\ref{app:Betti_Kunneth}.
This relation is due to the K\"{u}nneth theorem, which is also referred as in our previous work \cite{Yumoto:2023ums}.
By use of this relation, Thm.~\ref{thm5} and  Thm.~\ref{thm6} are rephrased as
\begin{equation}
\label{eq:thm5-6}
    \frac{\dim \left(\ker \mathcal{D}(G) \right)}{\rank \gamma} \leq \sum_{r=0}^{D} \beta_r(M)\,.
\end{equation}
The equality holds for $|V| = {\rm even}$ for $D_\mathrm{c}$ and $|V|={\rm odd}$ for $D_\mathrm{p}$.
Therefore, the number of Dirac zero-modes of $\mathcal{D}(G)$ is equal to or smaller than the sum of Betti number of the graph $G \equiv G_1 \cprod G_2 \cprod \cdots \cprod G_D$ with $G_\mu \in \left\{ D_\mathrm{c}, D_\mathrm{p} \right\}$.


\section{Implications for Lattice field theory}
\label{sec:LT}

\subsection{Lattice boson}
\label{subsec:LB}

The graph Laplacian defined in Eq.~(\ref{eq:Lap0}) corresponds to the Laplacian in the lattice field theory at least on the hypercubic lattices equivalent to the graphs $G \equiv G_1 \cprod G_2 \cprod \cdots \cprod G_D$ with  $G_\mu \in \left\{ D_\mathrm{c}, D_\mathrm{p} \right\}$ in the previous section.
The topology of this lattice can be a $D$-dimensional torus, $D$-dimensional hyperball(disk) or their cartesian products in $D$-dimensions.
Therefore, the massless action of $D$-dimensional free lattice scalar field $\phi_{n}$ defined on $N^D$ hypercubic-lattice sites is expressed as 
\begin{equation}
S_{\rm b} = -\sum_{n,\mu} {1\over{2}}\phi_{n} (2\phi_{n}-\phi_{n-\hat{\mu}} -\phi_{n+\hat{\mu}}) = -{1\over{2}}\phi L  \phi\,,
\label{eq:boson1}
\end{equation}
with $\phi \equiv (\phi_{1,0,0,...,0}, \phi_{2,0,0,...,0},...,\phi_{N,N,N,...,N})$.
The sum $\sum_{n,\mu}$ is the summation over lattice site $n = (n_{1}, n_{2}, ..., n_{D})$ and $\mu=(1,2,3,...,D)$ with the intervals being $1 \leq n_{\mu} \leq N$. 
$L$ is the graph Laplacian matrix we defined in Eq.~(\ref{eq:Lap0}).
Thus, the spectrum of free and massless lattice boson agrees with that of the graph Laplacian matrix.

Indeed, the equivalence between the lattice scalar operator and the graph Laplacian matrix is not restricted to the above hypercubic lattices. 
In the continuum limit, in which the number of vertices approaches to an infinity with the graph topology being intact, the graph Laplacian results in the continuum Laplacian for an arbitrary lattice or graph.
Thus, the coincidence in Eq.~(\ref{eq:boson1}) holds for generic lattices as
\begin{equation}
S_{\rm b} = \phi {\cal B} \phi= -{1\over{2}}\phi L \phi\,,
\label{eq:boson2}
\end{equation}
where ${\cal B}$ stands for the lattice boson operator.

As we have shown, 
the number of zero modes of the Laplacian matrix is equivalent to 0-th Betti number $\beta_{0}$.
From this fact, we derive the following theorem.
\begin{theorem}[Lattice scalar zero modes]
The number of zero modes of a free and massless lattice scalar operator ${\cal B}$ is equivalent to the 0-th Betti number of the graph (lattice), on which the lattice boson is defined. 
	\begin{equation}
	{\rm dim}({\rm Ker} {\cal B}) \,=\,	{\rm dim}({\rm Ker} L) \,=\, \beta_{0}\,.
	\end{equation}
For any simply connected graphs (lattices), the free boson operator has a single zero mode.	
\end{theorem}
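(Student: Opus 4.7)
The plan is to split the statement into two equalities and handle each via a previously established result. First, I would observe that Eq.~(\ref{eq:boson2}) identifies the free massless lattice scalar operator $\mathcal{B}$ with $-\tfrac{1}{2}L$, where $L$ is the graph Laplacian of the underlying lattice. Since scaling a matrix by a nonzero constant preserves its null space, this gives $\dim(\ker \mathcal{B}) = \dim(\ker L)$ immediately, with no further work needed.

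Next, I would invoke the theorem on Betti numbers and the Laplacian established earlier in Sec.~\ref{sec:TP}, specifically Eq.~(\ref{eq:lap_betti0}), which asserts $|V| - \rank L = \beta_{0}$. Applying the rank--nullity theorem to $L$ viewed as a linear endomorphism of $\mathbb{R}^{|V|}$ yields $\dim(\ker L) = |V| - \rank L = \beta_{0}$. Chaining the two equalities produces the main claim $\dim(\ker \mathcal{B}) = \beta_{0}$.

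For the corollary on simply connected graphs, a single connected component means $\beta_{0} = 1$ by the very definition of the zeroth Betti number, so the free scalar operator possesses exactly one zero mode, namely the constant configuration $\phi_{n} \equiv \text{const}$. The only conceptual point worth revisiting is the justification of Eq.~(\ref{eq:boson2}) for lattices more general than the hypercubic products $G = G_1 \cprod \cdots \cprod G_D$ with $G_\mu \in \{D_\mathrm{c}, D_\mathrm{p}\}$ treated explicitly in Eq.~(\ref{eq:boson1}). The paper invokes a continuum-limit argument, but for a finite lattice the identification really reduces to recognizing that the nearest-neighbor quadratic form $\sum_{\langle i,j\rangle}(\phi_i - \phi_j)^2$ is literally $\phi^T L \phi$ with $L = D - A$, which is the natural definition of a free massless scalar action on an unweighted graph. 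Consequently I do not anticipate a genuine obstacle, and the whole proof is bookkeeping of results already in hand.
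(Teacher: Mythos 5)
Your proposal is correct and follows essentially the same route as the paper: identify $\mathcal{B}$ with $-\tfrac{1}{2}L$ via Eq.~(\ref{eq:boson2}), then apply the earlier theorem $|V|-\rank L = \beta_0$ (Eq.~(\ref{eq:lap_betti0})) together with rank--nullity, and note $\beta_0=1$ for connected graphs. Your closing observation that the nearest-neighbor quadratic form is literally $\phi^{T}L\phi$ on any unweighted graph is a sound and slightly cleaner justification of the identification than the paper's continuum-limit remark, but it does not change the argument.
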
 
This theorem holds for any graph (lattice) in any dimensions as long as the lattice boson operator is defined as the graph Laplacian.
The assertion of this theorem for the connected graph is consistent with the results in the work \cite{Ohta:2021xty}.

\subsection{Lattice fermion}
\label{subsec:LF}
For lattice fermions, we restrict ourselves to the graphs (lattices) $G \equiv G_1 \cprod G_2 \cprod \cdots \cprod G_D$ with $G_\mu \in \left\{ D_\mathrm{c}, D_\mathrm{p} \right\}$, whose topology includes a $D$-dimensional torus, $D$-dimensional hyperball(disk) or their cartesian products.

By use of the definition of the anti-symmetrized adjacency matrix in Def.~\ref{def:ASAM}, 
the massless action of $D$-dimensional free naive lattice fermion $\psi_{n}$ on the lattice corresponding to the graph $G \equiv G_1 \cprod G_2 \cprod \cdots \cprod G_D$ with $G_\mu \in \left\{ D_\mathrm{c}, D_\mathrm{p} \right\}$ is expressed as 
\begin{align}
S_{\rm f} &= \sum_{n,\mu} \bar{\psi}_{n}\gamma_{\mu} (\psi_{n-\hat{\mu}} -\psi_{n+\hat{\mu}}) 
\nonumber\\
&= \bar{\psi}{\mathcal D}\psi 
\nonumber\\
&= \bar{\psi} \left[(A_{\rm as})_{\mu} \otimes \gamma_\mu \right]\psi\,.
\end{align}
We here use the standard definition of naive lattice fermion action.
This means that the Dirac operator of the massless, free and naive lattice fermion is equivalent to $(A_{\rm as})_{\mu} \otimes \gamma_\mu$ with $A_{\rm as}$ being the anti-symmetrized adjacency matrix. 
As we have proved, 
the number of zero-modes of the anti-symmetrized adjacency matrix $A_{\rm as}$ is equivalent to the sum of 0-th and 1-st Betti numbers $\beta_{0}+\beta_{1}$ for a cycle digraph $D_\mathrm{c}$ ($T^{1}$ lattice) with $|V| = {\rm even}$ or a simple directed path $D_\mathrm{p}$ ($B^{1}$ lattice) with $|V|={\rm odd}$ (Thm.~\ref{thm4}).
We also proved that the number of Dirac zero-modes of the matrix $(A_{\rm as})_{\mu} \otimes \gamma_\mu$ is equivalent to the sum of Betti numbers $\beta_{0}+\beta_{1}+...+\beta_{D}$ for lattices (graphs) $G \equiv G_1 \cprod G_2 \cprod \cdots \cprod G_D$ with $G_\mu \in \left\{ D_\mathrm{c}, D_\mathrm{p} \right\}$
with $|V| = {\rm even}$ for $D_\mathrm{c}$ and $|V|={\rm odd}$ for $D_\mathrm{p}$ (Thm.~\ref{thm5}).
For any other cases of $|V|$, the number of Dirac zero-modes of $(A_{\rm as})_{\mu} \otimes \gamma_\mu$ is smaller than the sum of Betti numbers $\beta_{0}+\beta_{1}+...+\beta_{D}$ (Thm.~\ref{thm6}).
From these facts, we obtain the following theorem.

\begin{theorem}[Lattice fermion zero modes]
The number of Dirac zero-modes of the free, massless and naive lattice Dirac operator is equivalent to the sum of all the Betti numbers of the graph (lattice) $G \equiv G_1 \cprod G_2 \cprod \cdots \cprod G_D$ with $G_\mu \in \left\{ D_\mathrm{c}, D_\mathrm{p} \right\}$
with $|V| = {\rm even}$ for $D_\mathrm{c}$ and $|V|={\rm odd}$ for $D_\mathrm{p}$, on which the lattice fermion is defined:
	\begin{equation}
		{\rm dim}({\rm Ker} {\mathcal D}) / {\rm rank}\gamma = \sum_{n=0}^{D} \beta_{n}\,.
	\end{equation}
For generic cases of the number of vertices (lattice sites) $|V|$, we have 
\begin{equation}
		{\rm dim}({\rm Ker} {\mathcal D}) / {\rm rank}\gamma \leq \sum_{n=0}^{D} \beta_{n}\,.
	\end{equation}
\end{theorem}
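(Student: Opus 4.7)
The plan is to reduce the statement to the already-established Thms.~\ref{thm5} and~\ref{thm6}, with the K\"unneth identity bridging $2^d$ and the topological sum of Betti numbers, so that the theorem becomes essentially a packaging result.

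First I would identify the free naive lattice Dirac operator on $G$ explicitly with the matrix $\mathcal{D}(G) = \sum_{\mu=1}^D (A_{\rm as})_\mu \otimes \gamma_\mu$ from Eq.~\eqref{eq:D_matrix}. This is immediate from the definition of the naive action: in each coordinate direction $\mu$ the hopping $\psi_{n-\hat{\mu}} - \psi_{n+\hat{\mu}}$ realises the antisymmetrized adjacency of $G_\mu$ (with cyclic or Dirichlet boundary depending on $G_\mu \in \{D_\mathrm{c}, D_\mathrm{p}\}$), and the tensor product with $\gamma_\mu$ supplies the spinor structure. Hence $S_\mathrm{f} = \bar\psi\,\mathcal{D}(G)\,\psi$, the Dirac kernel coincides with $\ker\mathcal{D}(G)$, and by the definition adopted just after Thm.~\ref{thm5} the number of Dirac zero-modes equals $\dim(\ker\mathcal{D}(G))/\rank\gamma$.

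Next, under the special hypothesis ($|V|$ even for every $D_\mathrm{c}$-factor and $|V|$ odd for every $D_\mathrm{p}$-factor), Thm.~\ref{thm5} directly gives
\begin{equation}
\frac{\dim(\ker\mathcal{D}(G))}{\rank\gamma} \,=\, 2^d,
\end{equation}
where $d$ is the number of cycle-digraph factors in $G$. To convert $2^d$ into $\sum_{n=0}^D \beta_n$ I would invoke the K\"unneth-type identity derived in App.~\ref{app:Betti_Kunneth} and quoted in Eq.~\eqref{eq:sum_betti}: since each $G_\mu$ mirrors either $S^1$ (with Betti ranks $(1,1)$, summing to $2$) or $B^1$ (with Betti ranks $(1,0)$, summing to $1$), the total Betti sum for the product factorises as $2^d \cdot 1^{D-d} = 2^d$. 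Combining the two displays yields the claimed equality. For the generic case of arbitrary $|V|$, Thm.~\ref{thm6} supplies the inequality $\dim(\ker\mathcal{D}(G))/\rank\gamma \le 2^d$, and feeding this into the same K\"unneth identity gives $\dim(\ker\mathcal{D}(G))/\rank\gamma \le \sum_n \beta_n$.

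The hardest step has already been absorbed into Thms.~\ref{thm3}--\ref{thm6}: namely, the rank computation of $\mathcal{D}(G)$ via simultaneous Fourier-type diagonalization along each Cartesian factor, together with the linear independence of the $\gamma_\mu$ that converts the vanishing of $(\mathcal{U}^\dagger \mathcal{D}\mathcal{U})_{mn}$ into simultaneous vanishing of sines and cosines indexed by $\mu$. Beyond that, the present theorem needs only one careful check, which I would include for clarity: the graph-theoretic Betti numbers $\beta_n(G)$ appearing on the right-hand side indeed match the topological Betti numbers of the underlying $D$-dimensional torus/disk/cylinder manifold, which is automatic from the Cartesian product structure in Eq.~\eqref{eq:D-dim_graph} and the identifications $D_\mathrm{c}\leftrightarrow S^1$, $D_\mathrm{p}\leftrightarrow B^1$ used throughout the paper.
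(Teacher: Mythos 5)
Your proposal is correct and follows essentially the same route as the paper: identify the naive Dirac operator with $\mathcal{D}(G)=\sum_\mu (A_{\rm as})_\mu\otimes\gamma_\mu$, invoke Thm.~\ref{thm5} for the equality and Thm.~\ref{thm6} for the inequality, and convert $2^d$ into $\sum_{n=0}^D\beta_n$ via the K\"unneth computation of Appendix~\ref{app:Betti_Kunneth}. The only addition beyond the paper's argument is your explicit remark that the graph-theoretic and manifold Betti numbers agree, which the paper leaves implicit.
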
 
We speculate that the theorem holds for massive fermions or other lattice fermion formulations since the introduction of mass or the modification of fermion actions never increase the number of Dirac zero-modes of free fermions.

The theorem is consistent with the conjecture on the relation of the number of fermion zero-modes and the sum of Betti numbers proposed in our previous work \cite{Yumoto:2023ums}. It can be well clarified by introducing the Wilson term $\mathcal W$ into the naive fermion and considering the Wilson fermion.
The Wilson fermion action is given by
\begin{align}
S_{\rm W} 
&= \sum_{n,\mu} \bar{\psi}_{n}\gamma_{\mu} (\psi_{n-\hat{\mu}} -\psi_{n+\hat{\mu}}) + {1\over{2}}\sum_{n,\mu}\bar{\psi}_{n}(2\psi_{n}-\psi_{n-\hat{\mu}} -\phi_{n+\hat{\mu}}) 
\nonumber\\
&= \bar{\psi}[{\mathcal D} + {\mathcal W}]\psi
\nonumber\\
&= \bar{\psi} \left[(A_{\rm as})_{\mu} \otimes \gamma_\mu  \,+\, {L\over{2}}\otimes {\bf 1}\right] \psi\,.
\end{align}
It is notable that the added matrix ${\mathcal W}$ corresponding to the Wilson term is proportional to the graph Laplacian matrix as ${\mathcal W} = L/2$.

For the one-dimensional cycle digraph $D_\mathrm{c}$ ($T^{1}$ lattice) with $|V| = {\rm even}$, we have
\begin{align}
\left[ {\mathcal D},\, {\mathcal W}   \right]\,=\,\left[ A_{\rm as},\, {L\over{2}}  \right]\,=\, 0\,,
\end{align}
which means that $A_{\rm as}$ and $L$ are simultaneously diagonalized.
As we have shown, $L$ has a single zero eigenvalue, which is equal to the 0-th Betti number $\beta_{0}=1$.
This zero-eigenvector of $L$ is also one of the zero eigenvectors of $A_{\rm as}$, whose number is $\beta_{0} +\beta_{1}$.
In this sense, {\it the Wilson term ${\mathcal W}=L/2$ works to preserve a single zero-mode associated with $\beta_{0} =1$ out of the two zero-modes of the naive fermion ${\mathcal D} = A_{\rm as}$ associated with $\beta_{0} + \beta_{1} = 2$.}\footnote{Precisely speaking, for the cycle digraph $D_\mathrm{c}$ with $|V| = {\rm even}$,
${\mathcal W}$ is also proportional to the combinatorial Laplacian matrix $L_{1}$.
Thus, we also rephrase that the Wilson term ${\mathcal W}=L_{1}/2$ preserves a single zero-mode associated with $\beta_{1} =1$ out of the two zero-modes.}
This argument is extended to the lattice fermions on higher-dimensional torus lattices:
The Wilson term, which is equivalent to the Laplacian matrix in arbitrary dimensions, works to preserve a single zero-mode associated with $\beta_{0} =1$ out of the $2^d$ zero-modes of the naive fermion.
This is the graph-theoretical reason why the Wilson fermion extracts a single degree of freedom from the multiple doublers.


\section{Summary and Discussion}
\label{sec:SD}
In this paper, we have studied operators in lattice field theory using spectral graph theory and partially proved the conjecture on the relation between the Dirac zero-modes and the Betti numbers of the graph \cite{Yumoto:2023ums}.
We have introduced an ``anti-symmetrized adjacency matrix" $A_{\rm as}$ for a cycle digraph ($T^1$ lattice) and a simple directed path ($B^1$ lattice) and prove that the maximal number of zero eigenvalues of $A_{\rm as}$ is the sum of zeroth and first Betti numbers $\beta_{0} +\beta_{1}$.
It is notable that the graph Laplacian matrix is equivalent to the lattice scalar operator and the Wilson term of lattice fermion, while the anti-symmetrized adjacency matrix and its generalizations to higher dimensions are equivalent to naive Dirac operators.
Based on these facts, we have proved the two theorems:
(i) the number of zero-modes of a free lattice scalar operator agrees with the zero-th Betti number of the graph (lattice) on which the lattice boson is defined, 
(ii) the maximal number of Dirac zero-modes of a free Dirac operator agrees with the sum of all the Betti numbers of the graph (lattice) structured as cartesian products of cycle digraphs ($T^1$ lattice) and simple directed paths ($B^1$ lattice).

The generalization of the theorems to generic cases including the graph with other topologies is one of the urgent problems in the avenue.
For instance, we can construct a $D$-dimensional sphere lattice in a specific manner: It is constructed by putting lids on the top and bottom of the $D$-dimensional cylinder.
$D$-dimensional cylinder is given by a cartesian-product graph as,
\begin{equation}
     G \equiv D_\mathrm{c} \cprod D_\mathrm{p} \cprod \cdots \cprod D_\mathrm{p} \,, 
\end{equation}
where $D_\mathrm{c}$ and $D_\mathrm{p}$ stand for cycle digraphs and simple directed paths.
The only difference between $D$-dimensional cylinder and sphere lattices is the existence of lids at the north and south poles. We have two choices to regard the area surrounded by the edge loop at the top and bottom of the cylinder as a hole or a face, the former of which leads to a $D$-dimensional cylinder and the other of which leads to a $D$-dimensional sphere. 
For a free lattice fermion, this difference is irrelevant since only the plaquette variable composed of link variables lives on faces.
Thus, we can derive the number of zero-modes of lattice fermions on $D$-dimensional spheres just by looking into that on $D$-dimensional cylinders.
We find that the maximal number of zero-modes of free naive lattice fermion on the $D$-dimensional sphere (or the $D$-dimensional cylinder) is two, which is equivalent to the sum of the Betti numbers of $D$-dimensional sphere.
We will prove this fact in details in our upcoming work.

One may ask a question whether we introduce the gauge field into our setups.
Lattice fermion operator with the $U(1)$ background link variable giving a non-zero winding number (topological charge) in two dimensions is regarded as a certain matrix in spectral graph theory.
By use of this fact, we may be able to re-interpret the index theorem connecting the topological charges and the Dirac zero-modes in terms of graph theory.


\begin{acknowledgements}
This work of T. M. is supported by the Japan Society for the Promotion of Science (JSPS) Grant-in-Aid for Scientific Research (KAKENHI) Grant Numbers 23K03425 and 22H05118.
This work of J. Y. is supported by the Sasakawa Scientific Research Grant from The Japan Science Society.
The completion of this work is owed to the discussion in the workshop ‘Lattice field theory and continuum field theory’ at Yukawa Institute for Theoretical Physics, Kyoto University (YITP-W-22-02) and ‘Novel Lattice Fermions and their Suitability for High-Performance Computing and Perturbation Theory’ at Mainz Institute for Theoretical Physics, Johannes Gutenberg University.
\end{acknowledgements}


\appendix
\section{Proofs of the theorems}

\subsection{The proof of theorem.~1}
\label{app:Betti_laplacian}

We will prove that the equation in Eq.~(\ref{eq:lap_betti0}) holds for undirected graphs.
The following proof refers to the proof on eigenvalues of Laplacian matrix \cite{Eigenvalues_Laplacian}.
Firstly, we prove it for a connected graph $G$, whihch is a graph with a single connected component.
The $0$-th Betti number $\beta_0(G)$ is $\beta_0(G) = 1$ since the graph has a single connected component.
Due to this fact, what we need to do is to show $|V| - \rank L(G) = \beta_0(G) = 1$.
For a vector $\bm{v} \in \mathbb{C}^{|V|}$, where $\displaystyle \bm{v} = \sum_{i=1}^{|V|} x_i \bm{e}_i$ with $\bm{e}_i$ is the standard basis, $\bm{v}^\dagger L \bm{v}$ is 
\begin{equation}
    \bm{v}^\dagger L(G) \bm{v} 
    = \bm{v}^{\dagger} \left( D(G) - A(G) \right) \bm{v}
    = \bm{v}^\dagger D(G) \bm{v} - \bm{v}^\dagger A(G) \bm {v} \,,
\end{equation}
where $D(G)$ and $A(G)$ are the degree matrix of $G$ and the adjacency matrix of $G$ respectively.
Based on the definitions of degree matrix and adjacency matrix, 
\begin{equation}
\begin{split}
    \bm{v}^\dagger L(G) \bm{v} 
    &= \sum_{i=1}^{|V|} \deg(v_i) |x_i|^2 
        - \sum_{\left\{i,j\right\} \in E} 
        \left( \bar{x}_ix_j + \bar{x}_jx_i \right) \\
    &= \sum_{\left\{i,j\right\} \in E} 
        \left( |x_i|^{2} + |x_j|^{2} \right)
        - \sum_{\left\{i,j\right\} \in E} 
        \left( \bar{x}_ix_j + \bar{x}_jx_i \right) \\
    &= \sum_{\left\{i,j\right\} \in E} 
        \left( \bar{x}_i - \bar{x}_j \right)
        \left( x_i - x_j \right) \\
    &= \sum_{\left\{i,j\right\} \in E} 
        \left| x_i - x_j \right|^{2}
\end{split}\,,
\end{equation}
where $\left\{i,j\right\}$ stands for an edge between one vertex $v_i$ and other vertex $v_j$.
Furthermore, $V$ and $E$ are the set of vertices and the set of edges in $G$.

If the vector $\bm{v}$ is a zero-mode (zero-eigenvector) of the Laplacian $L(G)$, $\bm{v}^\dagger L(G) \bm{v}$ satisfies the following equation,
\begin{equation}
    \bm{v}^\dagger L(G) \bm{v} 
    = \sum_{\left\{i,j\right\} \in E} 
        \left| x_i - x_j \right|^{2}
    = 0 \,.
\end{equation}
From this and $\left| x_i - x_j \right|^{2} \geq 0$, $x_i = x_j$ for any edge $\left\{ i,j \right\} \in E$ is obtained.
Since we consider a connected graph, the components of the zero-mode $\bm{v}$ must satisfy $x_1 = x_2 = \cdots = x_{|V|}$.
If there is $x_i \neq x_j$ in components of zero-mode $\bm{v}$, the graph can be divided into $G_1$ and $G_2$ such that there is no edges between $G_1$ and $G_2$.
But, this is inconsistent with the fact that the graph is a connected graph.
We then have that the zero-mode $\bm{v}$ for Laplacian $L(G)$ is unique and explicitly written as $\bm{v} = \alpha \sum_{i=1}^{|V|} \bm{e}_i$ with $\alpha \in \mathbb{C}$.
Hence, $|V| - \rank L(G) = \beta_0(G) = 1$ holds for a connected graph $G$ since the rank of Laplacian is $L(G) = |V|-1$.


Secondly, we prove it for a graph with $k$ connected components.
The $0$-th Betti number for this graph is $\beta_{0}(G) = k$ since the $0$-th Betti number is equal to the number of connected components.
What we have to do is to prove $|V| - \rank L(G) = \beta_{0}(G) = k$.
The graph can be divided into $k$ connected graphs such that there are no edges between each two of connected graphs.
They are denoted as $G_\nu$ for $\nu \in \left\{ 1,2,\cdots, k\right\}$.
Furthermore, the number of vertices of $G_\nu$ is denoted as $|V|_{G_\nu}$ and $\sum_{\nu=1}^{k} |V|_{G_\nu} = |V|$. 
Then, a Laplacian $L(G)$ is a block matrix as
\begin{equation}
    L(G) = 
    \begin{pmatrix}
        L(G_1)\\
        &L(G_2)\\
        &&\ddots\\
        &&&L(G_k)
    \end{pmatrix}\,,
\end{equation}
since there are no edges between each two of connected graphs.
The rank of each matrix $L(G_\nu)$ is $\rank L(G_\nu) = |V|_{G_\nu}-1$ because each matrix $L(G_\nu)$ is a Laplacian for a connected graph $G_\nu$.
Hence, $|V| - \rank (G) = \beta_{0}(G) = k$ holds for the graph $G$ with $k$ connected components since the rank of the Laplacian $L(G)$ is given by $\rank L(G) = \sum_{\nu=1}^{k} \rank L(G_\nu) = |V| - k$.
We now conclude that $|V| - \rank (G) = \beta_{0}(G)$ holds for generic undirected graphs.

Next, we will prove that the equation in Eq.~(\ref{eq:lap_betti1}) holds for undirected graphs.
We consider a graph with $k$ connected components.
The Euler characteristic $\chi(G)$ of the graph $G$ is $\chi(G) = \beta_0(G) - \beta_1(G) = |V| - |E|$. 
The $0$-th Betti number is then expressed as $\beta_0(G)= |V| - \rank L(G)$ by use of the above proof we have proved.
Therefore, $|E| - \rank L(G) = \beta_1 (G)$ holds for any graphs since $\chi(G) = |V| - \rank L(G) - \beta_1(G) = |V| - |E|$.


\subsection{The proof of theorem.~6}
\label{app:proof_thm6}
We will prove Thm.~\ref{thm6}. 
We consider four digraphs: the cycle digraph with even vertices $D_\mathrm{c}^{\mathrm{(even)}}$, the one with odd vertices $D_\mathrm{c}^{\mathrm{(odd)}}$, the simple directed path with even vertices $D_\mathrm{p}^{\mathrm{(even)}}$, and the one with odd vertices $D_\mathrm{p}^{\mathrm{(odd)}}$.
The graph $G$ is constructed a cartesian-product of these digraphs.
The number of the cycle digraphs is $d$ and the number of the simple directed paths is $D-d$.
Then, the right side in Eq.~(\ref{eq:thm6}) is $\Disp{\rank \gamma \cdot \prod_{\mu=1}^{D} \Bigl\{ \beta_0(G_\mu) + \beta_1(G_\mu) \Bigr\} = \rank \gamma \cdot 2^d}$.
Due to this fact, what we need do is just to show
\begin{equation}
    |V|\cdot \rank \gamma - \rank \mathcal{D}(G) 
    \leq \rank \gamma \cdot 2^d\,.
\end{equation}
Similarly to the proof of Thm.~\ref{thm5}, the eigenvalues of $\mathcal{D}(G)$ defined in Eq.~(\ref{eq:D_matrix}) is
\begin{equation}
\begin{split}
    &\left( \mathcal{U}^\dagger \mathcal{D}(G)\, \mathcal{U} \right)_{mn}\\
    &= 2i \left\{
        \sum_{\mu_\mathrm{c}^{\mathrm{even}} \in S_\mathrm{c}^{\mathrm{even}}} \gamma_{\mu_\mathrm{c}^{\mathrm{even}}}
        \sin \left( \frac{2\pi(m_{\mu_\mathrm{c}^{\mathrm{even}}}-1)}{|V|_{\mu_\mathrm{c}^{\mathrm{even}}}} \right)
        +
        \sum_{\mu_\mathrm{c}^{\mathrm{odd}} \in S_\mathrm{c}^{\mathrm{odd}}} \gamma_{\mu_\mathrm{c}^{\mathrm{odd}}}
        \sin \left( \frac{2\pi(m_{\mu_\mathrm{c}^{\mathrm{odd}}}-1)}{|V|_{\mu_\mathrm{c}^{\mathrm{odd}}}} \right)
        \right.\\
    &\qquad\left.+
        \sum_{\mu_\mathrm{p}^{\mathrm{even}} \in S_\mathrm{p}^{\mathrm{even}}} \gamma_{\mu_\mathrm{p}^{\mathrm{even}}} 
        \cos \left( \frac{m_{\mu_\mathrm{p}^{\mathrm{even}}} \pi}{|V|_{\mu_\mathrm{p}^{\mathrm{even}}} +1} \right) 
        +
        \sum_{\mu_\mathrm{p}^{\mathrm{odd}} \in S_\mathrm{p}^{\mathrm{odd}}} \gamma_{\mu_\mathrm{p}^{\mathrm{odd}}} 
        \cos \left( \frac{m_{\mu_\mathrm{p}^{\mathrm{odd}}} \pi}{|V|_{\mu_\mathrm{p}^{\mathrm{odd}}} +1} \right) 
    \right\} \delta_{mn}\,,
\end{split}
\end{equation}
where four sets are defined as
\begin{equation}
\begin{gathered}
    S_\mathrm{c}^{\mathrm{even}} \equiv 
    \left\{ \mu \setmid G_\mu = D_\mathrm{c}^{\mathrm{(even)}} \right\}
    ,\quad
    S_\mathrm{c}^{\mathrm{odd}} \equiv
    \left\{ \mu \setmid G_\mu = D_\mathrm{c}^{\mathrm{(odd)}} \right\}
    ,\\
    S_\mathrm{p}^{\mathrm{even}} \equiv
    \left\{ \mu \setmid G_\mu = D_\mathrm{p}^{\mathrm{(even)}} \right\}
    ,\quad
    S_\mathrm{p}^{\mathrm{odd}} \equiv
    \left\{ \mu \setmid G_\mu = D_\mathrm{p}^{\mathrm{(odd)}} \right\}\,.
\end{gathered}
\end{equation}
respectively.
And these sets satisfy $|S_\mathrm{c}^{\mathrm{even}}| + |S_\mathrm{c}^{\mathrm{odd}}| = d$ and $|S_\mathrm{p}^{\mathrm{even}}| + |S_\mathrm{p}^{\mathrm{odd}}| = D-d$.
Then, the conditions for the diagonal matrix to have zero-modes are
\begin{equation}
\begin{split}
    &\sin \left( \frac{2\pi(m_{\mu_\mathrm{c}^{\mathrm{even}}}-1)}{|V|_{\mu_\mathrm{c}^{\mathrm{even}}}} \right)
    =
    \sin \left( \frac{2\pi(m_{\mu_\mathrm{c}^{\mathrm{odd}}}-1)}{|V|_{\mu_\mathrm{c}^{\mathrm{odd}}}} \right)\\
    &= 
    \cos \left( \frac{m_{\mu_\mathrm{p}^{\mathrm{even}}} \pi}{|V|_{\mu_\mathrm{p}^{\mathrm{even}}} +1} \right)
    = 
    \cos \left( \frac{m_{\mu_\mathrm{p}^{\mathrm{odd}}} \pi}{|V|_{\mu_\mathrm{p}^{\mathrm{odd}}} +1} \right)
    = 0\,,
\end{split}
\end{equation}
where we used linearly independence of $\gamma$-matrices.
The solutions of this equation are
\begin{equation}
    m_{\mu_\mathrm{c}^{\mathrm{even}}} = 1\,, \quad
    m_{\mu_\mathrm{c}^{\mathrm{odd}}} = 0\,, \quad
    m_{\mu_\mathrm{p}^{\mathrm{odd}}} = \frac{|V|_{\mu_\mathrm{p}^{\mathrm{odd}}} +1}{2}\,
\end{equation}
or
\begin{equation}
    m_{\mu_\mathrm{c}^{\mathrm{even}}} = \frac{|V|_{\mu_\mathrm{c}^{\mathrm{even}}}}{2}+1\,, \quad
    m_{\mu_\mathrm{c}^{\mathrm{odd}}} = 0\,, \quad
    m_{\mu_\mathrm{p}^{\mathrm{odd}}} = \frac{|V|_{\mu_\mathrm{p}^{\mathrm{odd}}} +1}{2}\,.
\end{equation}
Note that there are no solutions if $S_\mathrm{p}^{\mathrm{even}} \neq \emptyset$.
Hence, $|V| - \rank \mathcal{D}(G)$ is obtained as
\begin{equation}
    |V| \cdot \rank \gamma - \rank \mathcal{D}(G) = 
    \begin{cases}
        \rank \gamma \cdot 2^{|S_\mathrm{c}^{\mathrm{even}}|} \cdot 1^{|S_\mathrm{c}^{\mathrm{odd}}|} \cdot 1^{|S_\mathrm{p}^{\mathrm{odd}}|} 
            &S_\mathrm{p}^{\mathrm{even}} = \emptyset\\
        0   &S_\mathrm{p}^{\mathrm{even}} \neq \emptyset
    \end{cases}\,.
\end{equation}
With $|S_\mathrm{c}^{\mathrm{even}}| = d$ and $|S_\mathrm{p}^{\mathrm{odd}}| = D-d$, $|V| - \rank \mathcal{D}(G)$ gets the maximum, which is $\rank \gamma \cdot 2^d$.
Therefore, the following inequality has been proved
\begin{equation}
    |V| \cdot \rank \gamma - \rank \mathcal{D}(G) \leq \rank \gamma \cdot 2^d
    = \rank \gamma \cdot \prod_{\mu=1}^{D} \Bigl\{ \beta_0(G_\mu) + \beta_1(G_\mu) \Bigr\}\,.
\end{equation}

As seen from this proof, we can also discuss the minimum number of zero-modes of $\mathcal{D}(G)$.
It is expressed as
\begin{equation}
    \min \left[  |V| \cdot \rank \gamma - \rank \mathcal{D}(G) \right]
    = \begin{cases}
        \rank \gamma   
            &S_\mathrm{p}^{\mathrm{even}} = \emptyset\\
        0
            &S_\mathrm{p}^{\mathrm{even}} \neq \emptyset
    \end{cases} \,,
\end{equation}
or 
\begin{equation}
    \min \left[ 
        \frac{\dim \left(\ker \mathcal{D}(G) \right)}{\rank \gamma}
    \right]
    = \begin{cases}
        1
            &S_\mathrm{p}^{\mathrm{even}} = \emptyset\\
        0
            &S_\mathrm{p}^{\mathrm{even}} \neq \emptyset
    \end{cases}\,.
\end{equation}

\subsection{Graph topology and Betti numbers}
\label{app:Betti_Kunneth}

We will prove the equation in Eq.~(\ref{eq:sum_betti}).
We assume that a $D$-dimensional manifold $M$ is a product space of the circle $T^1$ in $d$ dimensions and the line segment $B^1$ in $D-d$ dimensions. That is expressly written as $M = M_1 \times M_2 \times \cdots \times M_D$, where $M_\mu \in \left\{T^1, B^1 \right\}$ for $\mu \in \left\{ 1,2,\cdots,D\right\}$. Furthermore, the two sets regarding $\mu$, which are defined as $S_\mathrm{circ} = \left\{ \mu \setmid M_\mu = T^1 \right\}$ and $S_\mathrm{line} = \left\{ \mu \setmid M_\mu = B^1 \right\}$, satisfy $|S_\mathrm{circ}| = d$ and $|S_\mathrm{line}| = D - d$, respectively.
From K\"{u}nneth theorem, $r$-th homology of the manifold $M$ is written down as
\begin{equation}
\label{app:Kunneth}
    H_r (M) \cong
    \bigoplus_{ \sum_{\mu=1}^{D}r_{\mu} = r}
    \left(
        \bigotimes_{\mu=1}^{D} H_{r_\mu} (M_\mu)
    \right)\,,
\end{equation}
where $H_r(M)$ stands for the $r$-th homology of manifold $M$.
Now, we can restrict $r_\mu$ in $H_{r_\mu}(T^1)$ to $0\leq r_\mu \leq 1$ 
since the homology of $T^1$ is $H_{r_\mu} (T^1) = \mathbb{Z}$ for $r_\mu=0,\,1$, otherwise $H_{r_\mu} (T^1) = 0$.
On the one hands, $r_\mu$ in $H_{r_\mu}(B^1)$ can be restricted to $0$ since the homology of $B^1$ is $H_{0} (B^1) = \mathbb{Z}$ otherwise $H_{r_\mu} (B^1) = 0$.
Accordingly, the direct-product of $H_{r_\mu} (M_\mu)$ over $1 \leq \mu \leq D$ is
\begin{equation}
    \bigotimes_{\mu=1}^{D} H_{r_\mu} (M_\mu) 
    = \bigotimes_{\mu=1}^{D} \mathbb{Z}
    = \mathbb{Z}\,,
\end{equation}
since $\mathbb{Z} \ot \mathbb{Z} = \mathbb{Z}$.
And the summation of $r_\mu$ in Eq.~(\ref{app:Kunneth}) is rewritten as $\sum_{\mu=1}^{D} r_\mu = \sum_{\mu \in S_\mathrm{circ}} r_\mu + \sum_{\mu \in S_\mathrm{line}} r_\mu= \sum_{\mu \in S_\mathrm{circ}} r_\mu = r$ since $r_\mu$ for $\mu \in S_\mathrm{line}$ is restricted to $r_\mu = 0$.
As a result, Eq.~(\ref{app:Kunneth}) is rewritten as
\begin{equation}
    H_r (M) \cong
    \bigoplus_{\sum_{\mu \in S_\mathrm{circ}} r_\mu = r} \mathbb{Z}\,.
\end{equation}
The summation of $r_\mu$ shows that the right side in above equation is equal to ${}_{d}C_{r}$ dimensional integer space, that is $H_r (M) \cong \mathbb{Z}^{{}_{d}C_{r}}$.
Because the number of combination of $r_\mu$ that are $ \sum_{\mu \in S_\mathrm{circ}} r_\mu = r$ is ${}_{d}C_{r}$.
Hence, the rank of $r$-th homology is $ \rank H_r (M) = \rank \mathbb{Z}^{{}_{d}C_{r}} = {}_{d}C_{r}$.
Since $r$-th Betti number is the rank of $r$-th homology, the summation of Betti numbers $\beta_r (M)$ over $0 \leq r \leq D$ is obtained as
\begin{equation}
    \sum_{r=0}^D \beta_r (M) 
    = \sum_{r=0}^D \rank H_r (M)
    = \sum_{r=0}^D {}_{d}C_{r}
    = \sum_{r=0}^d {}_{d}C_{r} + \sum_{r=d+1}^D 0
    = 2^d \,,
\end{equation}
where we use the property of ${}_{r}C_{d}$ (${}_{r}C_{d} = 0$ for $r > d$).
Therefore, Eq.~(\ref{eq:sum_betti}) holds for the $D$-dimensional manifold $M$, which is a product space of the circle $T^1$ in $d$ dimensions and the line segment $B^1$ in $D-d$ dimensions.

\bibliographystyle{utphys}
\bibliography{./QFT,./refs,./math}

\end{document}